\documentclass[envcountsame, envcountsect, runningheads, a4paper]{llncs}
\usepackage[dvipsnames]{xcolor}
\usepackage{amsmath,amsfonts,amssymb}
\usepackage[utf8]{inputenc}
\usepackage{graphicx}
\usepackage{tikz}
\usepackage[bookmarks,unicode]{hyperref}
\usepackage[backgroundcolor=magenta!10!white]{todonotes}
\usepackage{setspace}
\usepackage{microtype}
\usepackage{enumitem}
\usepackage{ifthen}
\usepackage{caption}
\usepackage{microtype}
\usepackage[space]{cite}
\usepackage{numprint}
\usepackage{cleveref}
\usepackage{multirow}
\usepackage{subcaption}
\usepackage{placeins}
\usepackage[linesnumbered,ruled,vlined]{algorithm2e}



\spnewtheorem{notation}[theorem]{Notation}{\bfseries}{\upshape}
\spnewtheorem{setting}[theorem]{Setting}{\bfseries}{\upshape}
\usepackage{thmtools}


\usetikzlibrary{arrows,
	automata,
	calc,
	decorations,
	decorations.pathreplacing,
	positioning,
	shapes}
\tikzset{res/.style={ellipse,draw,minimum height=0.5cm,minimum width=0.8cm}}

\setenumerate[1]{label={(\arabic*)}} 
\numberwithin{equation}{section}
\npdecimalsign{.}
\npthousandsep{,}


\newcommand{\A}{\mathcal{A}}

\newcommand{\lang}{\mathcal{L}}

\renewcommand{\inst}{{\operatorname{inst}}}
\newcommand{\mdp}{\mathcal{C}}

\newcommand{\Act}{\operatorname{Act}}

\newcommand{\last}{\operatorname{last}}
\newcommand{\Paths}{\operatorname{Paths}}

\newcommand{\Cyl}{\operatorname{Cyl}}

\renewcommand{\Pr}{\mathrm{Pr}}
\newcommand{\fin}{{\operatorname{fin}}}

\newcommand{\Pref}{\operatorname{Pref}}
\newcommand{\expcost}{\operatorname{expcost}}
\newcommand{\pexpcost}{\operatorname{pexpcost}}
\newcommand{\maxcost}{\operatorname{maxcost}}

\newcommand{\ExpRew}{\operatorname{ExpRew}}

\newcommand{\pp}{\mathtt{PP}}
\newcommand{\ptime}{\mathtt{P}}
\newcommand{\np}{\mathtt{NP}}
\newcommand{\conp}{\mathtt{coNP}}
\newcommand{\Goal}{\lozenge E}
\newcommand{\goal}{\lozenge \mathit{error}}
\newcommand{\goalst}{\mathit{error}}
\newcommand{\failst}{\mathit{safe}}
\newcommand\sto{\mathrel{\overset{\makebox[0pt]{\mbox{\normalfont\tiny\sffamily $s'$}}}{\mapsto}}}
\newcommand\iotato{\mathrel{\overset{\makebox[0pt]{\mbox{\normalfont\tiny\sffamily $s_0$}}}{\mapsto}}}
\newenvironment{proofsketch}{\proof}{\endproof}
\captionsetup[table]{width=.9\linewidth, labelfont={bf,up},aboveskip=5pt}

\newif \iflongversion
\longversionfalse

\iflongversion
\newcommand{\citeapp}[1]{\Cref{#1}}
\else
\newcommand{\citeapp}[1]{the full version~\cite{FunkeJB19}}
\fi

\iflongversion

\else

\fi

\begin{document}

\title{Probabilistic causes in Markov chains}

%
%
\author{Christel Baier \and Florian Funke \and Simon Jantsch\and\\ Jakob Piribauer \and Robin Ziemek}
\authorrunning{Baier et al.}
%
\institute{Technische Universität Dresden\thanks{This work was funded by DFG grant 389792660 as part of TRR~248, the Cluster of Excellence EXC 2050/1 (CeTI, project ID 390696704, as part of Germany’s Excellence Strategy), DFG-projects BA-1679/11-1 and BA-1679/12-1, and the Research Training Group QuantLA (GRK 1763).} \\
	\email{\{christel.baier, florian.funke, simon.jantsch,\\ jakob.piribauer, robin.ziemek\}@tu-dresden.de}}
\maketitle
\begin{abstract}
	The paper studies a probabilistic notion of causes in Markov chains that relies on the \emph{counterfactuality} principle and the \emph{probability-raising} property. This notion is motivated by the use of causes for monitoring purposes where the aim is to detect faulty or undesired behaviours \emph{before} they actually occur. A cause is a set of finite executions of the system \emph{after which} the probability of the effect exceeds a given threshold. We introduce multiple types of costs that capture the consump-tion of resources from different perspectives, and study the complexity of computing cost-minimal causes

\end{abstract}

\section{Introduction}

The study of cause-effect relationships in formal systems has received considerable attention over the past 25 years.
Notions of causality have been proposed within various models, including structural equation models \cite{Pearl09,HalpernP2001,Halpern15}, temporal logics in Kripke structures \cite{ChocklerHK2008,BeerBCOT09} and Markov chains \cite{KleinbergM2009, KleinbergM2010}, and application areas have been identified in abundance, ranging from finance \cite{Kleinberg2011} to medicine \cite{KleinbergH11} to aeronautics \cite{IbrahimPTESA20}. 
These approaches form an increasingly powerful toolkit aimed at explaining \emph{why} an observable phenomenon (the effect) has happened, and which previous events (the causes) are logically linked to its occurrence. 
As such, causality plays a fundamental building block in determining moral responsibility \cite{ChocklerH04, BrahamvanHees2012} or legal accountability \cite{FeigenbaumHJJWW11}, and ultimately fosters user acceptance through an increased level of transparency \cite{Miller17}.

Despite the variety of models, application areas, and involved disciplines, all approaches essentially rely on (one of) two central paradigms that dictate how causes are linked to their effects: the \emph{counterfactuality} principle and the \emph{probability-raising} property. Counterfactual reasoning prescribes that an effect would not have happened if the cause had not occurred. 
Probability-raising states that the probability of the effect is higher whenever the cause has been observed.

The contribution of this paper is twofold: First, we define a novel notion of \emph{cause} for $\omega$-regular properties in stochastic operational models. Second, we study the complexity of computing optimal causes for cost mechanisms motivated by monitoring applications.

The causes presented in this paper combine the two prevailing causality paradigms mentioned above into a single concept. More specifically, a $p$-cause for an $\omega$-regular property $\lang$ in a discrete-time Markov chain is a set of finite executions $\pi$ of the system such that the probability that $\lang$ occurs after executing $\pi$ is at least $p$, where $p$ is typically larger than the overall probability of $\lang$. 
The counterfactuality principle is invoked through the additional requirement that almost every execution exhibiting the event $\lang$ contains a finite prefix which is a member of the $p$-cause.
This condition makes our approach amenable to the needs of monitoring a system at runtime. 

Imagine a critical event that the system should avoid (e.g., a fully automated drone crashing onto the ground), and assume that a $p$-cause for this event is known (e.g., physical specifications foreshadowing a crash).
Typically, the probability threshold $p$ -- which can be thought of as the sensitivity of the monitor -- should be lower if the criticality of the event is higher. 
As the system is running, as soon as the execution seen so far is part of the $p$-cause, the monitor can trigger an alarm and suitable countermeasures can be taken (e.g., manual control instead of automated behavior). 
As such, our approach can be \emph{preventive} in nature.

The monitoring application outlined above suggests computing a $p$-cause from the model before the system is put to use. 
However, multiple $p$-causes may exist for the same property, which raises the question which one to choose. 
Cyber-physical systems consume time, energy and other resources, which are often subject to budget restrictions.
Furthermore, the intended countermeasures may incur different costs depending on the system state.
Such costs can be modelled using state weights in the Markov chain, which induce weight functions on the finite executions either in an accumulative (total resource consumption) or instantaneous (current consumption intensity) fashion. 
On top of this model, we present three cost mechanisms for causes: (1) The \emph{expected cost} measures the expected resource consumption until the monitor triggers an alarm or reaches a safe state, (2) the \emph{partial expected cost} measures the expected consumption where executions reaching a safe state do not incur any cost, and (3) the \emph{maximal cost} measures the maximal consumption that can occur until an alarm is triggered.

\begin{figure}[t]
	\centering
	{\def\arraystretch{1.1}\tabcolsep=5pt
		\begin{tabular}{|c|c|c|c|}
			\hline
			& \multirow{2}{*}{$\expcost$} & \multirow{2}{*}{$\pexpcost$} &  \multirow{2}{*}{$\maxcost$} \\
			&&& \\
			\hline
			non-negative weights & \multirow{2}{*}{in $\ptime$ (\ref{thm: complexity expcost})}  &pseudo-polyn. (\ref{thm: expcost0 minimal path}) & \multirow{2}{*}{in $\ptime$ (\ref{thm: maxcost})}\\
			accumulated & & $\pp$-hard (\ref{thm: expcost0 PP-hard})   & \\\hline
			arbitrary weights & \multirow{2}{*}{in $\ptime$ (\ref{thm: complexity expcost})} & \multirow{2}{*}{$\pp$-hard (\ref{thm: expcost0 PP-hard})} & pseudo-polyn. \multirow{2}{*}{(\ref{thm: maxcost})}\\
			accumulated &&& in $\np \cap \conp$  \multirow{2}{*}{\hspace{7mm}} \\\hline
			arbitrary weights& \multirow{2}{*}{in $\ptime$ (\ref{thm: instcost}) }& \multirow{2}{*}{in $\ptime$ (\ref{thm: instcost})}&  \multirow{2}{*}{in $\ptime$ (\ref{thm: instcost})}\\
			instantaneous &&& \\\hline
	\end{tabular}}
	\caption{Summary of complexity results for different kinds of cost.}
	\label{tab: main table}
\end{figure}

\Cref{tab: main table} summarizes our results regarding the complexity of computing cost-minimal $p$-causes for the different combinations of weight type and cost mechanism. 
To obtain these results we utilize a web of connections to the rich landscape of computational problems for discrete-time Markovian models. 
More precisely, the results for the expected cost rely on connections to the \emph{stochastic shortest path problem} (SSP) studied in \cite{BT91}. The pseudo-polynomial algorithm for partial expected costs on non-negative, accumulated weights uses \emph{partial expectations} in Markov decision processes \cite{fossacs2019}. 
The $\pp$-hardness result is proved by reduction from the \emph{cost problem} for acyclic Markov chains stated in \cite{HaaseK14}. 
The pseudo-polynomial algorithm for the maximal cost on arbitrary, accumulated weights applies insights from \emph{total-payoff games} \cite{BGHM2015, Chatterjee2017}.

Full proofs missing in the main document can be found in the appendix.

\subsubsection{Related Work.}
The structural model approach to actual causality \cite{HalpernP2001} has sparked notions of causality in formal verification \cite{ChocklerHK2008, BeerBCOT09}. 
The complexity of computing actual causes has been studied in~\cite{EiterL2004, EiterL2006}. 
A probabilistic extension of this framework has been proposed in \cite{Fenton-Glynn2016}. 
Recent work on checking and inferring actual causes is given in \cite{IbrahimP20}, and an application-oriented framework for it is presented in \cite{IbrahimPTESA20}.
The work \cite{KleinbergM2009} builds a framework for actual causality in Markov chains and applies it to infer causal relationships in data sets. 
It was later extended to continuous time data~\cite{Kleinberg2011} and to token causality~\cite{KleinbergM2010} and has been refined using new measures for the significance of actual and token causes \cite{HuangK15,ZhengK2017}.

A logic for probabilistic causal reasoning is given in \cite{VennekensDB09} in combination with logical programming. 
The work \cite{VennekensBD10} compares this approach to Pearl's theory of causality involving  Bayesian networks \cite{Pearl09}. 
The CP-logic of \cite{VennekensDB09} is close to the representation of causal mechanisms of \cite{DVJ2013}. 
The probability-raising principle goes back to Reichenbach \cite{Reichenbach56}.
It has been identified as a key ingredient to causality in various philosophical accounts, see e.g. \cite{Eells91}. 

Monitoring $\omega$-regular properties in stochastic systems modeled as Hidden Markov Chains (HMCs) was studied in \cite{SistlaS08,GondiPS09} and has recently been revived \cite{EsparzaKKW20}.
The trade-off between accuracy and overhead in runtime verification has been studied in \cite{RV11,RV12,RV13}.
In particular \cite{RV11} uses HMCs to estimate how likely each monitor instance is to violate a temporal property.
Monitoring the evolution of finite executions has also been investigated in the context of statistical model checking of LTL properties \cite{DacaHKP16}. How randomization can improve monitors for non-probabilistic systems has been examined in \cite{ChadhaSV09}. 
The safety level of \cite{FaranK2018} measures which portion of a language admits bad prefixes, in the sense classically used for safety languages.

\section{Preliminaries}\label{sec:prelims}

\subsubsection{Markov chains.} A \emph{discrete-time Markov chain} (DTMC) $M$ is a tuple $(S, s_0, \mathbf{P})$, where $S$ is a finite set of \emph{states}, $s_0\in S$ is the \emph{initial state}, and $\mathbf{P} \colon S \times S \to [0,1]$ is the \emph{transition probability function} where we require $\sum_{s' \in S} \mathbf{P}(s, s') = 1$ for all $s \in S$. For algorithmic problems all transition probabilities are assumed to be rational.
A \emph{finite path} $\hat\pi$ in $M$ is a sequence $s_0s_1\ldots s_n$ of states such that $\mathbf{P}(s_i, s_{i+1}) > 0$ for all $0 \leq i \leq n-1$. Let $\last(s_0\ldots s_n) = s_n$. Similarly one defines the notion of an \emph{infinite path} $\pi$. Let $\Paths_\fin(M)$ and $\Paths(M)$ be the set of finite and infinite paths. 
The set of prefixes of a path $\pi$ is denoted by $\Pref(\pi)$.
The \emph{cylinder set} of a finite path $\hat\pi$ is $\Cyl(\hat\pi) =\{\pi\in\ \Paths(M)\mid \hat\pi\in\Pref(\pi)\}$. We consider $\Paths(M)$ as a probability space whose $\sigma$-algebra is generated by such cylinder sets and whose probability measure is induced by $\Pr(\Cyl(s_0\ldots s_n)) = \mathbf{P}(s_0, s_{1})\cdot \ldots \cdot \mathbf{P}(s_{n-1}, s_n) $ (see \cite[Chapter 10]{BaierK2008} for more details). 

For an $\omega$-regular language $\lang \subseteq S^{\omega}$ let $\Paths_M(\lang) = \Paths(M) \cap \lang$. The probability of $\lang$ in $M$ is defined as $\Pr_M(\lang) = \Pr(\Paths_M(\lang))$. Given a state $s\in S$, let $\Pr_{M,s}(\lang) = \Pr_{M_s}(\lang)$, where $M_s$ is the DTMC obtained from $M$ by replacing the initial state $s_0$ with $s$.
If $M$ is clear from the context, we omit the subscript.
For a finite path $\hat\pi \in \Paths_\fin(M)$, define the conditional probability 
\[ \Pr_M(\lang \mid \hat\pi) =  \frac{\Pr_M \left(\Paths_M(\lang) \cap \Cyl(\hat\pi) \right)}{\Pr_M(\Cyl(\hat\pi))}.\]	

Given $E \subseteq S$, let $\Goal = \{ s_0s_1\ldots \in \Paths(M)\mid \exists i \geq 0. \; s_i\in E\}$. 
For such reachability properties we have $\Pr_M(\Goal \mid s_0\ldots s_n) = \Pr_{M, s_n}(\Goal)$ for any $s_0\ldots s_n\in \Paths_\fin(M)$. We assume $\Pr_{s_0}(\lozenge s) > 0$ all states $s \in S$.
Furthermore, define a \emph{weight function} on $M$ as a map $c: S \to \mathbb{Q}$.
We typically use it to induce a weight function $c: \Paths_\fin(M) \to \mathbb{Q}$ (denoted by the same letter) by accumulation, i.e., $c(s_0 \cdots s_n) = \sum_{i=0}^n c(s_i)$.
Finally, a set $\Pi \subseteq \Paths_\fin(M)$ is called \emph{prefix-free} if for every $\hat\pi\in \Pi$ we have $\Pi \cap \Pref(\hat\pi) = \{\hat\pi\}$. 

\subsubsection{Markov decision processes.} A \emph{Markov decision process} (MDP) $\mathcal{M}$ is a tuple $(S, \Act, s_0, \mathbf{P})$, where $S$ is a finite set of \emph{states}, $\Act$ is a finite set of actions, $s_0$ is the \emph{initial state}, and $\mathbf{P} \colon S \times \Act \times S \to [0,1]$ is the \emph{transition probability function} such that for all states $s \in S$ and actions $\alpha \in \Act$ we have $\sum_{s' \in S} \mathbf{P}(s, \alpha,s') \in \{0,1 \}$.
An action $\alpha$ is \emph{enabled} in state $s \in S$ if $\sum_{s' \in S} \mathbf{P}(s, \alpha,s')=1$ and we define $\Act(s) = \{\alpha \mid \alpha \text{ is enabled in } s\}$.
We require $\Act(s) \neq \emptyset$ for all states $s \in S$.

An infinite \emph{path} in $\mathcal{M}$ is an infinite sequence $\pi=s_0 \alpha_1s_1 \alpha_2s_2 \dots \in(S \times \Act)^\omega$ such that for all $i \geq 0$ we have $\mathbf{P}(s_i, \alpha_{i+1},s_{i+1})>0$. Any finite prefix of $\pi$ that ends in a state is a finite path.
A \emph{scheduler} $\mathfrak{S}$ is a function that maps a finite path $s_0 \alpha_1s_1  \dots s_n$ to an enabled  action $\alpha \in \Act(s_n)$. Therefore it resolves the nondeterminism of the MDP and induces a (potentially infinite) Markov chain $\mathcal{M}_{\mathfrak{S}}$. If the chosen action only depends on the last state of the path, i.e., $\mathfrak{S}(s_0 \alpha_1s_1  \dots s_n) = \mathfrak{S}(s_n)$, then the scheduler is called \emph{memoryless} and naturally induces a finite DTMC. For more details on DTMCs and MDPs we refer to \cite{BaierK2008}.

\section{Causes}

This section introduces a notion of \emph{cause} for $\omega$-regular properties in Markov chains.
For the rest of this section we fix a DTMC $M$ with state space $S$, an $\omega$-regular language $\lang$ over the alphabet $S$ and a threshold $p \in (0,1]$.

\begin{definition}[$p$-critical prefix] \label{def: p good bad path prefix}
	A finite path $\hat\pi$ is a \emph{$p$-critical prefix} for $\lang$ if $\Pr(\lang \mid  \hat\pi) \geq p$.
\end{definition}

\begin{definition}[$p$-cause] \label{def:p-cause}
	A $p$-cause for $\lang$ in $M$  is a prefix-free set of finite paths $\Pi \subseteq \Paths_\fin(M)$ such that
	\begin{enumerate}
		\item almost every $\pi \in \Paths_M(\lang)$ has a prefix $\hat\pi \in \Pi$, and
		\item every $\hat\pi \in \Pi$ is a $p$-critical prefix for $\lang$.
	\end{enumerate}
\end{definition}

Note that condition (1) and (2) are in the spirit of completeness and soundness as used in \cite{CF2015}.
The first condition is our invocation of the counterfactuality principle: Almost every occurrence of the effect (for example, reaching a target set) is preceded by an element in the cause.
If the threshold is chosen such that $p > \Pr_{s_0}(\lang)$, then the second condition reflects the probability-raising principle in that seeing an element of $\Pi$ implies that the probability of the effect $\lang$ has increased over the course of the execution.
For monitoring purposes as described in the introduction it would be misleading to choose $p$ below $\Pr_{s_0}(\lang)$ as this could instantly trigger an alarm before the system is put to use.
Also $p$ should not be too close to $1$ as this may result in an alarm being triggered too late.

If $\lang$ coincides with a reachability property one could equivalently remove the \emph{almost} from (1) of \Cref{def:p-cause}. In general, however, ignoring paths with probability zero is necessary to guarantee the existence of $p$-causes for all $p$.

\begin{example}
	Consider the DTMC $M $ depicted in \Cref{fig: simple example}. For $p = 3/4$, a possible $p$-cause for $\lang = \goal$ in $M$ is given by the set $\Pi_1 = \{s t, s u\}$ since both $t$ and $u$ reach $\goalst$ with probability greater or equal than $p$. The sets $\Theta_1 = \{s t, s u, s t u\}$ and $\Theta_2 = \{s t \goalst, s u\}$ are not $p$-causes: $\Theta_1$ is not prefix-free and for $\Theta_2$ the path $s t u \goalst$ has no prefix in $\Theta_2$. Another $p$-cause is $\Pi_2 =\{s t \goalst, s u, s t u\}$.
\end{example}

\begin{example}
	It can happen that there does not exist any finite $p$-cause. Consider \Cref{fig: infinite} and $p = 1/2$. Since $\Pr_{s}(\lozenge \goalst) < p$, the singleton $\{s \}$ is not a $p$-cause. Thus, for every $n \geq 0$ either $s^nt$ or $s^nt\goalst$ is contained in any $p$-cause, which must therefore be infinite. There may also exist non-regular $p$-causes (as languages of finite words over $S$). For example, for $A= \{n \in \mathbb{N} \mid n \text{ prime} \}$ the $p$-cause $\Pi_A = \{ s_0^n t \mid n \in A \} \cup \{ s_0^m t\goalst \mid m \notin A \}$ is non-regular.

\begin{figure}[t]
	\centering
	\begin{minipage}{0.43\textwidth}
		\centering
    \resizebox{\textwidth}{!}{
		\begin{tikzpicture}[->,>=stealth',shorten >=1pt,auto,node distance=0.5cm, semithick]
			\node[scale=0.8, state] (s0) {$s$};
			\node[scale=0.8, state] (s1) [right = 1.5 of s0] {$t$};
			\node[scale=0.8, state] (s2) [below =0.7 of s1] {$u$};
			\node[scale=0.8, state, accepting] (g) [right =1.5 of s1] {$\goalst$};
			\node[scale=0.8, state] (f) [right = 1.58 of s2] {$\failst$};
			
			\draw[<-] (s0) --++(-0.55,0.55);
			\draw (s0) -- (s1) node[pos=0.5,scale=0.8] {$1/2$};
			\draw (s0) -- (s2) node[pos=0.5,scale=0.8] {$1/2$};
			\draw (s1) -- (s2) node[pos=0.5,scale=0.8] {$7/8$};
			\draw (s1) -- (g) node[pos=0.5,scale=0.8] {$1/8$};
			\draw (s2) -- (g) node[pos=0.5,scale=0.8] {$3/4$};
			\draw (s2) -- (f) node[pos=0.5,scale=0.8] {$1/4$};		
		\end{tikzpicture}}
		\caption{Example DTMC $M$} \label{fig: simple example}
	\end{minipage}\hfill
	\begin{minipage}{0.54\textwidth}
		\centering
    \resizebox{0.85\textwidth}{!}{
		\begin{tikzpicture}[->,>=stealth',shorten >=1pt,auto,node distance=0.5cm, semithick]
			\node[scale=0.8, state] (s0) {$s$};
			\node[scale=0.8, state] (s1) [right = 1.5 of s0]{$t$};
			\node[scale=0.8, state, accepting] (g) [right = 1.5 of s1]{$\goalst$};
			\node[scale=0.8, state] (f) [below = 0.7 of s1]{$\failst$};
			
			\draw[<-] (s0) -- ++(-0.55,0.55);
			\path (s0) edge[loop below] node[scale=0.8] {$1/2$} (s0);
			\draw (s0) -- (s1) node[pos=0.5,scale=0.8] {$1/4$};
			\draw (s0) -- (f) node[pos=0.5,scale=0.8, left, yshift=-3mm] {$1/4$};
			\draw (s1) -- (f) node[pos=0.5,scale=0.8] {$1/10$};
			\draw (s1) -- (g) node[pos=0.5,scale=0.8] {$9/10$};
		\end{tikzpicture}}
		\caption{Infinite and non-regular $1/2$-causes} \label{fig: infinite}
	\end{minipage}
\end{figure}
\end{example}

\begin{remark}[Reduction to reachability properties]\label{rem: reachability properties}
	Let $\mathcal{A}$ be a deterministic Rabin automaton for $\lang$ and consider the product Markov chain $M \otimes \A$ as in \cite[Section 10.3]{BaierK2008}.
	For any finite path $\hat\pi=s_0\ldots s_n \in \Paths_\fin(M)$ there is a unique path $a(\hat\pi) = (s_0,q_1) (s_1, q_2)\ldots (s_n,q_{n+1}) \in \Paths_\fin(M \otimes \A)$ whose projection onto the first factor is $\hat\pi$.
	Under this correspondence, a bottom strongly connected component (BSCC) of $M \otimes \A$ is either \emph{accepting} or \emph{rejecting}, meaning that for every finite
	path reaching this BSCC the corresponding path $\hat\pi$ in $M$ satisfies $\Pr_M(\lang\mid\hat\pi) =1$, or respectively, $\Pr_M(\lang\mid\hat\pi) =0$~\cite[Section 10.3]{BaierK2008}.
	This readily implies that almost every $\pi \in \Paths_M(\lang)$ has a $1$-critical prefix and that, therefore, $p$-causes exist for any $p$.
	
	Moreover, if $U$ is the union of all accepting BSCCs in $M \otimes \A$, then 	\begin{equation}\label{eq:reduction}
	\Pr_{M}(\lang \mid \hat\pi) = \Pr_{M \otimes \A}\big(\lozenge U \mid a(\hat\pi) \big)
	\end{equation}
	holds for all finite paths $\hat\pi$ of $M$~\cite[Theorem 10.56]{BaierK2008}. Hence every $p$-cause $\Pi_1$ for $\lang$ in $M$ induces a $p$-cause $\Pi_2$ for $\lozenge U$ in $M \otimes \A$ by taking $\Pi_2 = \{a(\hat\pi) \mid \hat\pi \in \Pi_1\}$.
	Vice versa, given a $p$-cause $\Pi_2$ for $\lozenge U$ in $M \otimes \A$, then the set of projections of paths in $\Pi_2$ onto their first component is a $p$-cause for $\lang$ in $M$. 
	In summary, the reduction of $\omega$-regular properties on $M$ to reachability properties on the product $M \otimes \A$ also induces a reduction on the level of causes.
\end{remark}

	 \Cref{rem: reachability properties} motivates us to focus on reachability properties henceforth. 
	To apply the algorithms presented in \Cref{sec:costs} to specifications given in richer formalisms such as LTL, one would first have to  apply the reduction to reachability given above, which increases the worst-case complexity exponentially. 
	
	In order to align the exposition with the monitoring application we are targeting, we will consider the target set as representing an erroneous behavior that is to be avoided. 
	After collapsing the target set, we may assume that there is a unique state $\goalst \in S$, so $\lang = \goal$ is the language we are interested in. 
	Further, we collapse all states from which $\goalst$ is not reachable to a unique state $\failst \in S$ with the property $\Pr_{\failst}(\goal)=0$.
	After this pre-processing, we have $\Pr_{s_0}(\lozenge \{\goalst, \failst\}) = 1$.
	Define the set 
	\[S_p := \{s \in S \mid \Pr_s(\goal) \geq p\}\] 
	of all acceptable final states for $p$-critical prefixes.
	This set is never empty as $\goalst \in S_p$ for all $p \in (0,1]$.

	There is a partial order on the set of $p$-causes defined as follows: $\Pi \preceq \Phi$ if and only if for all $\phi \in \Phi$ there exists $\pi \in \Pi$ such that  $\pi \in \Pref(\phi)$.
	The reflexivity and transitivity are straightforward, and the antisymmetry follows from the fact that $p$-causes are prefix-free. However, this order itself has no influence on the probability. In fact for two $p$-causes $\Pi, \Phi$ with $\Pi \preceq \Phi$ it can happen that for $\pi \in \Pi, \phi\in \Phi $ we have $\Pr(\goal \mid \pi) \geq \Pr(\goal \mid \phi)$.
	This partial order admits a minimal element which is a regular language over $S$ and which plays a crucial role for finding optimal causes in \Cref{sec:costs}.

\begin{restatable}[Canonical $p$-cause]{proposition}{canonicalpath} \label{prop: canonical path}
  Let
	\begin{gather*}
		\Theta = \left \{s_0 \cdots s_n \in \Paths_\fin(M) \;\middle| \; s_n\in S_p\text{ and for all $i<n$: } \:s_i\notin S_p\right \}.
	\end{gather*}
  Then $\Theta$ is a regular $p$-cause (henceforth called the \emph{canonical} $p$-cause) and for all $p$-causes $\Pi $ we have $\Theta \preceq \Pi$.
\end{restatable}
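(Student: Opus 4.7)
The plan is to verify, in order, that $\Theta$ (i) is prefix-free, (ii) consists of $p$-critical prefixes, (iii) covers almost every $\pi \in \Paths_M(\lang)$, (iv) is a regular language, and finally (v) satisfies $\Theta \preceq \Pi$ for every $p$-cause $\Pi$. Since we are in the reachability setting of the paragraph preceding the proposition, $\lang = \goal$, and I will use the fact that $\goalst \in S_p$ (because $\Pr_{\goalst}(\goal) = 1 \geq p$).

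\medskip

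\noindent\textbf{$p$-cause properties.} For prefix-freeness, suppose two elements $\hat\pi = s_0\cdots s_n$ and $\hat\pi' = s_0\cdots s_n \cdots s_m$ of $\Theta$ satisfy $\hat\pi \in \Pref(\hat\pi')$ with $n < m$. Then $s_n \in S_p$ by definition of $\Theta$ applied to $\hat\pi$, while the condition for $\hat\pi'$ forces $s_i \notin S_p$ for every $i < m$, in particular for $i = n$, a contradiction. For $p$-criticality, any $\hat\pi = s_0\cdots s_n \in \Theta$ has $s_n \in S_p$, so $\Pr(\goal \mid \hat\pi) = \Pr_{s_n}(\goal) \geq p$ by the reachability identity noted in the preliminaries. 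For coverage, let $\pi = s_0 s_1 \ldots \in \Paths_M(\goal)$; then $s_k = \goalst$ for some $k$, hence the set $\{i \geq 0 \mid s_i \in S_p\}$ is nonempty. Its minimum $n$ yields a prefix $s_0 \cdots s_n \in \Theta$. In fact this holds for \emph{every} such $\pi$, which is stronger than needed.

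\medskip

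\noindent\textbf{Regularity.} The set $\Theta$ is accepted by a DFA whose states are $S \cup \{\bot\}$, whose initial state is $s_0$, whose transition function follows the support of $\mathbf{P}$ (sending a state in $S_p$ to the sink $\bot$ on any letter, and leaving $\bot$ undefined or sinking it), and whose accepting states are exactly $S_p$. Thus $\Theta$ is a regular language over $S$.

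\medskip

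\noindent\textbf{Minimality.} Let $\Pi$ be any $p$-cause and $\phi = s_0 \cdots s_m \in \Pi$. Since $\phi$ is $p$-critical we have $\Pr_{s_m}(\goal) = \Pr(\goal \mid \phi) \geq p$, so $s_m \in S_p$. Consequently the set $\{i \leq m \mid s_i \in S_p\}$ is nonempty; let $n$ be its minimum. Then $s_n \in S_p$ and $s_i \notin S_p$ for every $i < n$, i.e., $s_0 \cdots s_n \in \Theta$, and this path is a prefix of $\phi$. Hence $\Theta \preceq \Pi$.

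\medskip

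There is no real obstacle here: the proof is a direct unfolding of the definitions once one observes that each $p$-critical prefix must end in $S_p$, and that every realisation of the effect $\goal$ visits $S_p$. The only mild subtlety is making sure that (i) prefix-freeness of $\Theta$ follows from the minimality clause ``$s_i \notin S_p$ for all $i < n$'' rather than from any extra argument, and (ii) the minimality statement relies crucially on the fact that $p$-critical prefixes are characterised pointwise by their last state, which is specific to reachability and justifies the reduction made in Remark~\ref{rem: reachability properties} before this proposition.
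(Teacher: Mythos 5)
Your proof is correct and follows essentially the same route as the paper's: the last-state characterisation of $p$-critical prefixes, the minimal $S_p$-index both for coverage and for $\Theta \preceq \Pi$, and a DFA with accepting states $S_p$ made terminal. The only nit is that your automaton should use a fresh initial state with a transition into $s_0$ on letter $s_0$ (as the paper does), since taking $s_0$ itself as initial state misaligns the word with the run by one letter (e.g., when $s_0 \in S_p$ your DFA accepts the empty word rather than the one-letter word $s_0$).
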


We now introduce an MDP associated with $M$ whose schedulers correspond to the $p$-causes of $M$.
This is useful both to represent $p$-causes and for algorithmic questions we consider later.
\begin{definition}[$p$-causal MDP]\label{rem: MDP construction}
  For the DTMC $M = (S,s_0,\mathbf{P})$ define the \emph{$p$-causal MDP} $\mdp_p(M) = (S,\{continue,pick\},s_0,\mathbf{P}')$ associated with $M$, where $\mathbf{P}'$ is defined as follows:
  \begin{align*}
    \mathbf{P'}(s,continue,s') &= \mathbf{P}(s,s') \text{ for all } s,s' \in S \\
    \mathbf{P'}(s,pick,\goalst) &= 
    \begin{cases}
      1 & \text{ if } s \in S_p \\
      0 & \text{ otherwise}
    \end{cases}
  \end{align*}
  Given a weight function $c$ on $M$, we consider $c$ also as weight function on $\mdp_p(M)$.
\end{definition}

\begin{example}
	\Cref{fig: MDP construction accumulated} demonstrates the $p$-causal MDP construction of $\mdp_p(M)$. The black edges are transitions of $M$, probabilities are omitted. Let us assume $S_p \backslash \{\goalst\}= \{s_1,s_3,s_4 \}$. To construct $\mdp_p(M)$ one adds transitions for the action $pick$, as shown by red edges.
	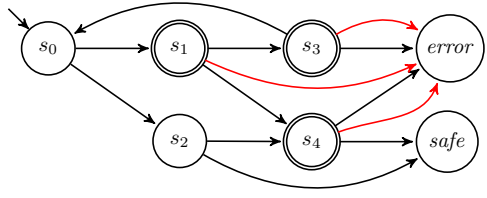
\begin{figure}[tbp]
		\centering
		\begin{tikzpicture}[->,>=stealth',shorten >=1pt,auto,node distance=0.5cm, semithick]
			
			\node[scale=0.8, state] (s0) {$s_0$};
			\node[scale=0.8, state,accepting] (s1) [right = 1 of s0]{$s_1$};
			\node[scale=0.8, state] (s2) [below = 0.5 of s1]{$s_2$};
			\node[scale=0.8, state,accepting] (s3) [right = 1 of s1]{$s_3$};
			\node[scale=0.8, state,accepting] (s4) [below = 0.5 of s3]{$s_4$};
			\node[scale=0.8, state] (f) [right = 1 of s4]{$\failst$};
			\node[scale=0.8, state] (g) [right = 1 of s3]{$\goalst$};
			
			\draw[<-] (s0) -- ++(-0.55,0.55);
			\draw (s0) -- (s1) node[pos=0.5,scale=0.8] {};
			\draw (s0) -- (s2) node[pos=0.5,scale=0.8] {};
			\draw (s1) -- (s3) node[pos=0.5,scale=0.8] {};
			\draw (s1) -- (s4) node[pos=0.5,scale=0.8] {};
			\draw (s2) -- (s4) node[pos=0.5,scale=0.8] {};
			\draw (s2) to [out=330, in=210] (f) node[pos=0.5,scale=0.8] {};
			\draw (s3) to [out=150,in=30] (s0) node[pos=0.5,scale=0.8] {};
			\draw (s3) -- (g) node[pos=0.5,scale=0.8] {};
			\draw (s4) -- (g) node[pos=0.5,scale=0.8] {};
			\draw (s4) -- (f) node[pos=0.5,scale=0.8] {};
			\draw[red] (s1) to [out=335, in=205] (g) node[pos=0.5,scale=0.8] {};
			\draw[red] (s3) to [out=30,in=150] (g) node[pos=0.5,scale=0.8] {};
			\draw[red] (s4) to [out=20,in=250] (g) node[pos=0.5,scale=0.8] {};
			
		\end{tikzpicture}
		\caption{Illustration of the $p$-causal MDP construction} \label{fig: MDP construction accumulated}
	\end{figure}
\end{example}

Technically, schedulers are defined on all finite paths of an MDP $\mathcal{M}$. 
However, under any scheduler, there are usually paths that cannot be obtained under the scheduler.
Thus we define an equivalence relation $\equiv$ on the set of schedulers of $\mathcal{M}$ by setting $\mathfrak{S} \equiv \mathfrak{S}'$ if $\Paths(\mathcal{M}_{\mathfrak{S}})=\Paths(\mathcal{M}_{\mathfrak{S}'})$.
Note that two schedulers equivalent under $\equiv$ behave identically.

\begin{lemma} \label{lem: scheduler trafo}
	There is a one-to-one correspondence between equivalence classes of schedulers in $\mdp_p(M)$ w.r.t. $\equiv$ and $p$-causes in $M$ for $\goal$.
\end{lemma}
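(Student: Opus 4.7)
The plan is to construct explicit maps in both directions and verify they are mutual inverses up to $\equiv$. Given a $p$-cause $\Pi$, define the scheduler $\mathfrak{S}_\Pi$ on $\mdp_p(M)$ by the rule: for a finite MDP-path $\rho = s_0 \alpha_1 s_1 \dots \alpha_n s_n$ with state projection $\hat\pi = s_0 \dots s_n$, set $\mathfrak{S}_\Pi(\rho) = pick$ if $\hat\pi \in \Pi$ and $\mathfrak{S}_\Pi(\rho) = continue$ otherwise. This is a valid scheduler because every element of $\Pi$ ends in $S_p$, where $pick$ is enabled. Conversely, given a scheduler $\mathfrak{S}$, define $\Pi_\mathfrak{S}$ to consist of those state sequences $s_0 \dots s_n$ for which the continue-only MDP-path $\rho_n = s_0\, continue\, s_1 \cdots continue\, s_n$ satisfies $\mathfrak{S}(\rho_n) = pick$ while $\mathfrak{S}(\rho_i) = continue$ for every strict prefix $\rho_i$ of $\rho_n$.

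First I would verify that $\Pi_\mathfrak{S}$ is a $p$-cause. Prefix-freeness is built into the first-pick clause, and the $p$-criticality of each element follows because $pick$ is enabled only in $S_p$, forcing every $\hat\pi \in \Pi_\mathfrak{S}$ to end in $S_p$. The coverage condition of \Cref{def:p-cause} relies on the preprocessing: $\Pr_{s_0}(\lozenge\{\goalst,\failst\}) = 1$ ensures that almost every $\pi \in \Paths_M(\goal)$ reaches $\goalst \in S_p$; along the continue-only MDP-lift of $\pi$, the scheduler (within its $\equiv$-class) chooses $pick$ at some point, producing a prefix of $\pi$ in $\Pi_\mathfrak{S}$.

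To show the maps are mutual inverses, I would verify $\Pi_{\mathfrak{S}_\Pi} = \Pi$ by unwinding the definitions and using prefix-freeness of $\Pi$ to conclude that $\mathfrak{S}_\Pi$ picks $pick$ precisely on MDP-paths projecting to $\Pi$, while picking $continue$ on all strict prefixes thereof. For the other direction, $\mathfrak{S}_{\Pi_\mathfrak{S}} \equiv \mathfrak{S}$ because both schedulers agree on every continue-only finite MDP-path arising with positive probability under $\mathfrak{S}$, so $\Paths(\mathcal{M}_{\mathfrak{S}_{\Pi_\mathfrak{S}}}) = \Paths(\mathcal{M}_\mathfrak{S})$; on paths arising after a $pick$-transition the MDP already resides in $\goalst$, and any remaining freedom is absorbed by the $\equiv$-relation.

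The main obstacle I foresee is rigorously establishing the coverage condition for $\Pi_\mathfrak{S}$, since relating a.s.-reachability of $\goalst$ in $\mathcal{M}_\mathfrak{S}$ to the first-pick behaviour of $\mathfrak{S}$ requires careful use of $\equiv$ and of the structure of continue-only segments (in particular, to dispose of schedulers that persistently refuse to pick $pick$, which give rise to no $p$-cause and must therefore be identified with a canonical $\equiv$-representative that does).
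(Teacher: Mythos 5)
Your forward map $\Pi \mapsto \mathfrak{S}_\Pi$ coincides with the paper's, but your backward map has a genuine gap. You define $\Pi_{\mathfrak{S}}$ to contain only those paths on which $\mathfrak{S}$ actually chooses $pick$ for the first time. For a scheduler that never chooses $pick$ (or does so only on a non-covering set of paths), this set fails condition (1) of \Cref{def:p-cause}: paths reaching $\goalst$ without a prior $pick$ have no prefix in $\Pi_{\mathfrak{S}}$, so $\Pi_{\mathfrak{S}}$ is not a $p$-cause and the correspondence breaks. Your proposed repair---identifying such a scheduler under $\equiv$ with a ``canonical representative that does pick''---does not work: $\equiv$ is defined by equality of the induced path sets $\Paths(\mathcal{M}_{\mathfrak{S}})$, and choosing $pick$ at a reachable path ending in some $s \in S_p$ with $s \neq \goalst$ genuinely changes the induced paths (the $pick$-transition jumps to $\goalst$ with probability $1$, replacing the $continue$-successors of $s$). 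Hence the always-$continue$ scheduler sits in its own equivalence class and is not identified with any picking scheduler; your argument that ``the scheduler (within its $\equiv$-class) chooses $pick$ at some point'' is false for it.

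The paper closes exactly this gap by enlarging $\Pi_{\mathfrak{S}}$ with a second clause: a path $\hat\pi$ also belongs to $\Pi_{\mathfrak{S}}$ if it ends in $\goalst$ and $\mathfrak{S}$ chooses $pick$ on no prefix of $\hat\pi$. Since $\goalst \in S_p$, these added paths are still $p$-critical prefixes, coverage of almost every path in $\goal$ holds for \emph{every} scheduler (each such path either sees a $pick$ or eventually ends in $\goalst$ because $\Pr_{s_0}(\lozenge\{\goalst,\failst\})=1$), and the first-occurrence conditions keep the set prefix-free. With your definition the backward map is not even well defined into the set of $p$-causes, so the bijection cannot be established without this additional clause (or an equivalent restriction of the scheduler class).
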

\begin{proof}
	Given a $p$-cause $\Pi$ for $\goal$ in $M$, we construct the equivalence class of scheduler $[\mathfrak{S}_{\Pi}]$ by defining $\mathfrak{S}_{\Pi}(\hat\pi)=pick$ if $\hat\pi \in \Pi$, and otherwise $\mathfrak{S}_{\Pi}(\hat\pi)=continue$. Vice versa, given an equivalence class $[\mathfrak{S}]$ of schedulers, we define the $p$-cause 
	\begin{align*}
		\Pi_{\mathfrak{S}}= \left \{\hat\pi \in \Paths_\fin(M) \; \middle| \; 
		\begin{aligned}
			& \mathfrak{S}(\hat\pi)=pick \text{ or $\hat\pi$ ends in $\goalst$ and} \\
			& \text{$\mathfrak{S}$ does not choose $pick$ on any prefix of $\hat\pi$}
		\end{aligned} 
		\right \}
	\end{align*}
	Since $pick$ can only be chosen once on every path in $\Paths(\mathcal{M}_{\mathfrak{S}})$, it is easy to see that $\mathfrak{S} \equiv \mathfrak{S}'$ implies $\Pi_{\mathfrak{S}} = \Pi_{\mathfrak{S}'}$.
	Note that every $\hat\pi \in \Pi_\mathfrak{S}$ is a $p$-critical prefix since it ends in $S_p$ and every path in $\goal$ is covered since either $pick$ is chosen or $\hat\pi$ ends in $\goalst$.
	Furthermore, the second condition makes $\Pi$ prefix-free. 
	\qed
\end{proof}

\subsection{Types of $p$-causes and induced monitors}
We now introduce two classes of $p$-causes which have a comparatively simple representation, and we explain what classes of schedulers they correspond to in the $p$-causal MDP and how monitors can be derived for them.

\begin{definition}[State-based $p$-cause]\label{def: state-based}
	A $p$-cause $\Pi$ is \emph{state-based} if there exists a set of states $Q \subseteq S_p$ such that $\Pi = \{s_0 \ldots s_n\in \Paths_\fin(M) \mid s_n \in Q \text{ and } \forall i<n: \; s_i \notin Q \}$.
\end{definition}
State-based $p$-causes correspond to memoryless schedulers of $\mdp_p(M)$ which choose $pick$ exactly for paths ending in $Q$.
For DTMCs equipped with a weight function we introduce \emph{threshold-based $p$-causes}:
\begin{definition}[Threshold-based $p$-cause]
  A $p$-cause $\Pi$ is \emph{threshold-based} if there exists a map $T: S_p \to \mathbb{Q}\cup\{\infty\}$ such that
  \[\Pi = \left\{ s_0 \cdots s_n\in \Paths_\fin(M) \; \middle| \; 
  \begin{aligned}&s_0 \cdots s_n \in \operatorname{pick}(T) \text{ and}\\
  &s_0 \cdots s_i \notin \operatorname{pick}(T) \text{ for } i < n\end{aligned}\right\}\]
  where $\operatorname{pick}(T) = \{s_0 \ldots s_n \in \Paths_\fin(M) \mid  s_n\in S_p \text{ and } c(s_0\ldots s_n) < T(s_n)\}$.
\end{definition}
Threshold-based $p$-causes correspond to a simple class of \emph{weight-based} schedulers of the $p$-causal MDP, which base their decision in a state only on whether the current weight exceeds the threshold or not.
Intuitively, threshold-based $p$-causes are useful if triggering an alarm causes costs while reaching a safe state does not (see \Cref{sub:pexpcost}): The idea is that cheap paths (satisfying $c(s_0 \ldots s_n) < T(s_n)$) are picked for the $p$-cause, while expensive paths are continued in order to realize the chance (with probability $\leq1{-}p$) that a safe state is reached and therefore the high cost that has already been accumulated is avoided. 

The concept of $p$-causes can be used as a basis for monitors that raise an alarm as soon as a state sequence in the $p$-cause has been observed. State-based $p$-causes have the advantage that they are realizable by ``memoryless'' monitors that only need the information on the current state of the Markov chain. 
Threshold-based monitors additonally need to track the weight that has been accumulated so far until the threshold value of the current state is exceeded.
So, the memory requirements of monitors realizing a threshold-based $p$-cause are given by the logarithmic length of the largest threshold value for $S_p$-states.
All algorithms proposed in~\Cref{sec:costs} for computing cost-minimal $p$-causes will return $p$-causes that are either state-based or threshold-based with polynomially bounded memory requirements.

\subsection{Comparison to prima facie causes}

The work \cite{KleinbergM2009} presents the notion of \emph{prima facie causes} in DTMCs where both causes and events are formalized as PCTL state formulae.
In our setting we can equivalently consider a state $\goalst \in S$ as the effect and a state subset $C \subseteq S$ constituting the cause.
We then reformulate \cite[Definition 4.1]{KleinbergM2009} to our setting.

\begin{definition}[cf. \cite{KleinbergM2009}] \label{def: Kleinberg Causality}
	A set $C \subseteq S$ is a \emph{$p$-prima facie cause} of $\goal$ if the following three conditions hold:
	\begin{enumerate}
		\item The set $C$ is reachable from the initial state and $\goalst \notin C$.
		\item $\forall s \in C: \;\Pr_{s}(\goal) \geq p$
		\item $\Pr_{s_0}(\goal)<p$
	\end{enumerate}
\end{definition}

The condition $p>\Pr_{s_0}(\goal)$ we discussed for $p$-causes is hard-coded here as (3).
In \cite{KleinbergM2009} the value $p$ is implicitly existentially quantified and thus conditions (2) and (3) can be combined to $\Pr_s(\goal) > \Pr_{s_0}(\goal)$ for all $s \in C$.
This encapsulates the probability-raising property.
However, $\goalst$ may be reached while avoiding the cause $C$, so $p$-prima facie causes do not entail the \emph{counterfactuality} principle.
\Cref{def:p-cause} can be seen as an extension of $p$-prima facie causes by virtue of the following lemma:

\begin{lemma}
	For $p>\Pr_{s_0}(\goal)$ every $p$-prima facie cause induces a state-based $p$-cause.
\end{lemma}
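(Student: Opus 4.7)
The plan is to construct a state-based $p$-cause directly from a $p$-prima facie cause $C\subseteq S$ and verify the two clauses of \Cref{def:p-cause}. The natural candidate is $Q = C \cup \{\goalst\}$, and then to use the state-based $p$-cause
\[\Pi_Q = \left\{ s_0\ldots s_n \in \Paths_\fin(M) \;\middle|\; s_n \in Q \text{ and } s_i\notin Q \text{ for all } i<n \right\}\]
from \Cref{def: state-based}. First I would check $Q \subseteq S_p$: condition (2) of \Cref{def: Kleinberg Causality} gives $C \subseteq S_p$, and $\goalst\in S_p$ since $\Pr_{\goalst}(\goal) = 1 \geq p$.

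Next I would verify that $\Pi_Q$ satisfies the two clauses of being a $p$-cause. Prefix-freeness is built into the ``first visit to $Q$'' style of the definition. For clause~(2) of \Cref{def:p-cause}, any $\hat\pi \in \Pi_Q$ ends in a state of $Q \subseteq S_p$, hence by the reachability identity recalled in \Cref{sec:prelims}, $\Pr(\goal \mid \hat\pi) = \Pr_{\last(\hat\pi)}(\goal) \geq p$, so $\hat\pi$ is a $p$-critical prefix. For clause~(1), consider any $\pi = s_0 s_1 \ldots \in \Paths_M(\goal)$; since $\pi$ eventually visits $\goalst\in Q$, the smallest index $j$ with $s_j \in Q$ is well-defined, and the prefix $s_0\ldots s_j$ lies in $\Pi_Q$. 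Thus every (not only almost every) path in $\Paths_M(\goal)$ is covered.

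Finally, I would note that the hypothesis $p > \Pr_{s_0}(\goal)$ (i.e., condition~(3)) guarantees the construction is nondegenerate: combined with condition~(2) it yields $s_0 \notin C$, and $s_0 = \goalst$ would give $\Pr_{s_0}(\goal) = 1 \geq p$, contradicting~(3). Hence $s_0 \notin Q$, and $\Pi_Q$ is not the trivial singleton $\{s_0\}$.

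There is no serious obstacle here; the lemma is essentially a verification. The only point requiring a moment's thought is that $C$ alone does \emph{not} give a state-based $p$-cause: a prima facie cause need not intercept every path to the effect, so one must add $\goalst$ to $Q$ to ensure that paths reaching $\goalst$ while avoiding $C$ are still covered. This is precisely the manifestation of the counterfactuality principle that distinguishes \Cref{def:p-cause} from \Cref{def: Kleinberg Causality}.
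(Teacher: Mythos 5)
Your proposal is correct and follows essentially the same route as the paper: both take $Q = C \cup \{\goalst\}$, observe $Q \subseteq S_p$ via condition (2) of \Cref{def: Kleinberg Causality}, and note that every path reaching $\goalst$ must visit $Q$, so the induced first-visit set is a state-based $p$-cause. Your version merely spells out the verification of the two clauses of \Cref{def:p-cause} in more detail.
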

\begin{proof}
	Let $C \subseteq S$ be a $p$-prima facie cause.
	By condition (1) and (2) of \Cref{def: Kleinberg Causality} we have $C \subseteq S_p \backslash \{\goalst\}$. 
	Since every path reaching $\goalst$ trivially visits a state in $ 
	Q : = C \cup \{\goalst\}\subseteq S_p$, the set $\Pi = \{s_0 \ldots s_n\in \Paths_\fin(M) \mid s_n \in Q \text{ and } \forall i<n: \; s_i \notin Q\}$ is a state-based $p$-cause.
	\qed
\end{proof}

\section{Costs of $p$-causes}
\label{sec:costs}

 In this section we fix a DTMC $M$ with state space $S$, unique initial state $s_0$, unique target and safe state $\goalst,\failst \in S$ and a threshold $p \in (0,1]$.
 As motivated in the introduction, we equip the DTMC of our model with a \emph{weight function} $c\colon S \to \mathbb{Q}$ on states and consider the induced accumulated weight function $c\colon \Paths_\fin(M)\to\mathbb{Q}$. These weights typically represent resources spent, e.g., energy, time, material, etc.

\subsection{Expected cost of a $p$-cause}

\begin{definition}[Expected cost]
 	Given a $p$-cause $\Pi$ for $\goal$ in $M$ consider the random variable $\mathcal{X}: \Paths(M) \to \mathbb{Q}$ with
	\[ \mathcal{X}(\pi)=c(\hat\pi) \text{ for } 
	\begin{cases}
	\hat\pi \in \Pi \cap \Pref(\pi) & \text{ if such } \hat\pi \text{ exists} \\
	\hat\pi \in \Pref(\pi) \text{ minimal with } \last(\hat\pi)=\failst & \text{ otherwise}.
	\end{cases} \]
	Since $\Pr_{s_0}(\lozenge \{\goalst, \failst\})=1$, paths not falling under the two cases above have measure $0$. Then the \emph{expected cost} $\expcost(\Pi)$ of $\Pi$ is the expected value of $\mathcal{X}$.
\end{definition}

The expected cost is a means by which the efficiency of causes for monitoring purposes can be estimated. Assume a $p$-cause $\Pi$ is used to monitor critical scenarios of a probabilistic system. This means that at some point either a critical scenario is predicted by the monitor (i.e., the execution seen so far lies in $\Pi$), or the monitor reports that no critical scenario will arise (i.e., $\failst$ has been reached) and can therefore be turned off.
If the weight function on the state space is chosen such that it models the cost of monitoring the respective states, then $\expcost(\Pi)$ estimates the average total resource consumption of the monitor. 

We say that a $p$-cause $\Pi$ is $\expcost$-\emph{minimal} if for all $p$-causes $\Phi$ we have $\expcost(\Pi) \leq \expcost(\Phi)$. By $\expcost^{\min}$, we denote the value $\expcost(\Pi)$ of any $\expcost$-minimal $p$-cause $\Pi$.

\begin{restatable}{theorem}{expcostComplexity} \label{thm: complexity expcost}
	\begin{enumerate}
	\item Given a non-negative weight function $c:S \to \mathbb{Q}_{\geq 0}$, the canonical $p$-cause $\Theta$ from \Cref{prop: canonical path} is $\expcost$-minimal.

	\item For an arbitrary weight function $c:S \to \mathbb{Q}$, an $\expcost$-minimal and state-based $p$-cause $\Pi$ and $\expcost^{\min}$ can be computed in polynomial time.
	\end{enumerate}
\end{restatable}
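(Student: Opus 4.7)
The plan for part (1) is a pointwise comparison of the random variables $\mathcal{X}$ associated with an arbitrary $p$-cause $\Pi$ and with the canonical $\Theta$. \Cref{prop: canonical path} gives $\Theta \preceq \Pi$, so whenever a path $\pi$ has a prefix $\phi \in \Pi$, it also has a (unique, by prefix-freeness of $\Theta$) prefix $\theta \in \Theta$ with $\theta \in \Pref(\phi)$; non-negativity of $c$ then gives $c(\theta) \leq c(\phi)$. If $\pi$ instead has no prefix in $\Pi$, then almost surely $\pi$ reaches $\failst$, and $\mathcal{X}_\Pi(\pi)$ is the weight of the first $\failst$-prefix $\hat\pi$. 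Since $\failst \notin S_p$ and $\failst$ is absorbing, any $\theta \in \Theta \cap \Pref(\pi)$ must lie as a prefix inside $\hat\pi$, so $c(\theta) \leq c(\hat\pi)$; if no such $\theta$ exists, $\mathcal{X}_\Theta(\pi) = \mathcal{X}_\Pi(\pi)$ by definition. Integrating $\mathcal{X}_\Theta \leq \mathcal{X}_\Pi$ over $\Pr_{s_0}$ yields $\expcost(\Theta) \leq \expcost(\Pi)$.

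For part (2) the plan is to reduce the search for an $\expcost$-minimal $p$-cause to the \emph{stochastic shortest path} (SSP) problem on the $p$-causal MDP $\mdp_p(M)$. After the harmless normalization $c(\goalst) = c(\failst) = 0$ (no resources are consumed once the monitor halts), \Cref{lem: scheduler trafo} identifies $p$-causes of $M$ with equivalence classes of schedulers of $\mdp_p(M)$, and by construction $\expcost(\Pi_\S)$ equals the expected accumulated weight under $\S$ until the absorbing set $\{\goalst, \failst\}$ is reached. The next observation is that every scheduler of $\mdp_p(M)$ is \emph{proper}: along every path either $pick$ is eventually chosen (leading to $\goalst$ in one step) or the path follows $M$'s transitions, and the latter case has probability one of reaching $\{\goalst, \failst\}$ by the preprocessing assumption. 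Properness together with finiteness places the instance in the framework of Bertsekas--Tsitsiklis~\cite{BT91}, which guarantees an optimal deterministic memoryless scheduler and produces it, together with $\expcost^{\min}$, by solving a polynomial-size linear program. The resulting memoryless scheduler translates via \Cref{lem: scheduler trafo} into a state-based $p$-cause.

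The main obstacle will be handling arbitrary, possibly negative, weights in part (2): a priori, negative cycles could drive the infimum of expected cost to $-\infty$. Properness, however, forces the expected hitting time of $\{\goalst,\failst\}$ to be finite under every scheduler; combined with the boundedness of the rational weights this keeps $\expcost^{\min}$ finite, which is exactly the regime in which SSP admits a polynomial-time LP-based solution.
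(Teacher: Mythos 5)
Your overall route coincides with the paper's. For part (1) you give the same pointwise domination $\mathcal{X}_\Theta \leq \mathcal{X}_\Pi$: the paper states it as monotonicity of $\expcost$ along the order $\preceq$ and partitions $\Paths(M)$ into $\Cyl(\Pi)$, $\Cyl(\Theta)\setminus\Cyl(\Pi)$ and the remainder, which is exactly your case split, including the observation that a $\Theta$-prefix of a $\failst$-path must sit inside the first $\failst$-prefix. For part (2) you use the same reduction to the stochastic shortest path problem on $\mdp_p(M)$ via \Cref{lem: scheduler trafo}. Your explicit properness argument --- every scheduler of $\mdp_p(M)$ is proper because not picking means following $M$, which reaches $\{\goalst,\failst\}$ almost surely --- is a detail the paper leaves implicit, and it is the right way to see why arbitrary (possibly negative) weights cause no divergence.

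The one step that does not survive scrutiny is the ``harmless normalization'' $c(\goalst)=c(\failst)=0$. Since the random variable $\mathcal{X}$ charges the full weight of the chosen prefix \emph{including its last state}, this normalization shifts $\expcost(\Pi)$ by $c(\goalst)\cdot\Pr(\text{the }\Pi\text{-prefix ends in }\goalst) \,+\, c(\failst)\cdot\Pr\bigl(\Paths(M)\setminus\Cyl(\Pi)\bigr)$, and both probabilities depend on $\Pi$: in \Cref{fig: simple example}, for instance, the canonical $3/4$-cause $\{st,su\}$ has no element ending in $\goalst$, while $\Pi_2=\{st\goalst,su,stu\}$ does. So for an arbitrary weight function the minimizer of the normalized objective need not minimize the original one, which is precisely the case part (2) claims to handle. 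The fix is standard and makes the normalization unnecessary: encode the state weights as transition costs in the SSP instance, charging $c(s')$ for a $continue$-step into $s'$ and $0$ for the artificial $pick$-step into $\goalst$, plus the initial offset $c(s_0)$. Then the SSP value under $\mathfrak{S}$ equals $\expcost(\Pi_{\mathfrak{S}})$ exactly, and the rest of your argument (properness, existence of an optimal memoryless scheduler by \cite{BT91}, translation to a state-based $p$-cause) goes through unchanged.
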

\begin{proofsketch}
	The statement (1) follows from the fact that if $\Pi \preceq \Phi$ holds for two $p$-causes, then we have $\expcost(\Pi) \leq \expcost(\Phi)$, which is shown in the appendix.
	The value $\expcost^{\min}=\expcost(\Theta)$ can then be computed in polynomial time using methods for expected rewards in Markov chains\cite[Section 10.5]{BaierK2008}.
	
  To show (2), we reduce our problem to the \emph{stochastic shortest path problem} (SSP)\cite{BT91} from $s_0$ to $\{\goalst, \failst\}$.
	By \Cref{lem: scheduler trafo} equivalence classes of schedulers in $\mdp_p(M)$ are in one-to-one correspondence with $p$-causes in $M$.
  Let $\Pi_\mathfrak{S}$ be a $p$-cause associated with a representative scheduler $\mathfrak{S}$.
  One can show that $\expcost(\Pi_\mathfrak{S})$ is equal to the expected accumulated weight of paths under scheduler $\mathfrak{S}$ in $\mdp_p(M)$ upon reaching $\{\goalst,\failst\}$.
  A scheduler $\mathfrak{S}^*$ minimizing this value can be computed in polynomial time by solving the SSP in $\mdp_p(M)$ \cite{BT91}, and the algorithm returns a memoryless such $\mathfrak{S}^*$.
   It follows that $\Pi_{\mathfrak{S}^*}$ is an $\expcost$-minimal and state-based $p$-cause.
	\qed
\end{proofsketch}

\subsection{Partial expected cost of a $p$-cause}
\label{sub:pexpcost}

In this section we study a variant of the expected cost where paths with no prefix in the $p$-cause are attributed zero costs.
A use case for this cost mechanism arises if the costs are not incurred by monitoring the system, but by the countermeasures taken upon triggering the alarm.
For example, an alarm might be followed by a downtime of the system, and the cost of this may depend on the current state and history of the execution.
In such cases there are no costs incurred if no alarm is triggered.

\begin{definition}[Partial expected cost]
	For a $p$-cause $\Pi$ for $\goal$ in $M$ consider the random variable $\mathcal{X}: \Paths(M) \to \mathbb{Q}$ with
	\begin{align*} \label{eq: expcost0 path variable}
	\mathcal{X}(\pi)= 
	\begin{cases}
	c(\hat\pi)& \text{for } \hat\pi \in \Pi \cap \Pref(\pi) \text{ if such } \hat\pi \text{ exists} \\
	0 & \text{otherwise}.
	\end{cases} 
	\end{align*}
	The \emph{partial expected cost} $\pexpcost(\Pi)$ of $\Pi$ is the expected value of $\mathcal{X}$.
\end{definition}
The analogous statement to \Cref{thm: complexity expcost} (1) does not hold for partial expected costs, as the following example shows.
\begin{example}
	\label{ex: counterexample monotonicity}
	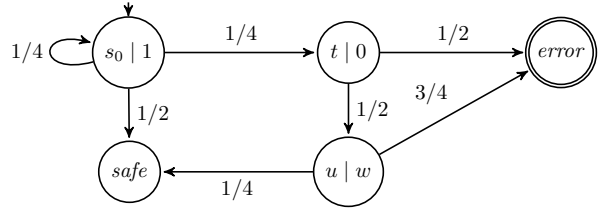
\begin{figure}[t]
			\centering
			\begin{tikzpicture}[->,>=stealth',shorten >=1pt,auto,node distance=0.5cm, semithick]
				\node[scale=0.8, state] (s0) {$s_0\mid 1$};
				\node[scale=0.8, state] (t) [right=2 of s0]{$t\mid 0$};
				\node[scale=0.8, state] (u) [below = 0.7 of t] {$u\mid w$};
				\node[scale=0.8, state, accepting] (g) [right =1.92 of t] {$\goalst$};
				\node[scale=0.8, state] (f) [left =2 of u] {$\failst$};
				
				\draw[<-] (s0) --++(0,0.7);
				\path (s0) edge[loop left] node[scale=0.8] {$1/4$} (s0);
				\draw (s0) -- (t) node[pos=0.5,scale=0.8] {$1/4$};
				\draw (s0) -- (f) node[pos=0.5,scale=0.8] {$1/2$};
				\draw (t) -- (u) node[pos=0.5,scale=0.8] {$1/2$};
				\draw (t) -- (g) node[pos=0.5,scale=0.8] {$1/2$};
				\draw (u) -- (g) node[pos=0.5,scale=0.8] {$3/4$};
				\draw (u) -- (f) node[pos=0.5,scale=0.8] {$1/4$};	
			\end{tikzpicture}
		\caption{An example showing that the partial expected cost is not monotonous on $p$-causes when $c$ is non-negative.} \label{fig: counterexample monotonicity}
	\end{figure}
	Consider the Markov chain depicted in Figure \ref{fig: counterexample monotonicity}.
	For $p=1/2$ and $\goal$ we have $S_p = \{t, u, \goalst\}$.
	The canonical $p$-cause is $\Theta= \{s_0^k t\mid k \geq 1 \}$ with  $\pexpcost(\Theta)= \sum_{k \geq 1} (1/4)^k \cdot k  = 4/9$.
    Now let $\Pi$ be any $p$-cause for $\goal$. 
	If the path $s_0^\ell t$ belongs to $\Pi$, then it contributes $(1/4)^\ell \cdot \ell$ to $\pexpcost(\Pi)$.
	If instead the paths $s_0^\ell t  \goalst$ and $s_0^\ell t  u \goalst$ belong to $\Pi$, they contribute $(1/4)^\ell \cdot 1/2 \cdot \ell +(1/4)^\ell \cdot 1/2 \cdot 3/4 \cdot (\ell+w)$.
	So, the latter case provides a smaller $\pexpcost$ if $l > 3 w$, and the $\pexpcost$-minimal  $p$-cause is therefore
	\[ \Pi= \{s_0^k t \mid 1 \leq k \leq 3 w \} \cup \{s_0^k t \goalst, s_0^k t u \mid 3 w < k \} . \]
	For $w=1$, the expected cost of this $p$-cause is $511/1152=4/9-1/1152$. 
	So, it is indeed smaller than $\pexpcost(\Theta)$.
\end{example}

\begin{restatable}{theorem}{pexpcostPseudoP}
	\label{thm: expcost0 minimal path}
	Given a non-negative weight function $c \colon S \to \mathbb{Q}_{\geq 0}$, a $\pexpcost$-minimal and threshold-based $p$-cause $\Pi$, and the value $\pexpcost^{\min}$, can be computed in pseudo-polynomial time. $\Pi$ has a polynomially bounded representation.
\end{restatable}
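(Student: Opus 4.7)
The approach is to recast the optimization as a minimum-partial-expectation problem on the $p$-causal MDP $\mdp_p(M)$ and then invoke the pseudo-polynomial algorithm of~\cite{fossacs2019}. By \Cref{lem: scheduler trafo}, every $p$-cause equals $\Pi_\mathfrak{S}$ for some scheduler $\mathfrak{S}$ of $\mdp_p(M)$, and under the induced Markov chain almost every path either first chooses $pick$ (and thereby reaches $\goalst$ in one step, committing the current accumulated weight as cost) or never chooses $pick$ and almost surely reaches $\failst$ at attributed cost $0$. Unpacking the definitions yields
\[\pexpcost(\Pi_\mathfrak{S}) \;=\; \mathbb{E}_{(\mdp_p(M))_\mathfrak{S}}\bigl[\,c(\hat\pi)\cdot \mathbf{1}_{\hat\pi \text{ uses } pick}\,\bigr],\]
so minimizing $\pexpcost$ over $p$-causes coincides with minimizing the partial expectation over schedulers of $\mdp_p(M)$ with target set $\{\goalst\}$.

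I will then invoke \cite{fossacs2019} for minimum partial expectations in MDPs with non-negative accumulated weights. That work supplies (i) a \emph{saturation bound} $W$ that is polynomial in the bit-size of the input, (ii) the existence of an optimal \emph{weight-based} scheduler whose decisions depend only on the current state and on the accumulated weight capped at $W$, and (iii) a pseudo-polynomial algorithm that unfolds the state space by accumulated weight up to $W$ and performs a backward MDP expectation computation. Applied to $\mdp_p(M)$, this produces $\pexpcost^{\min}$ and an optimal weight-based scheduler $\mathfrak{S}^*$ in pseudo-polynomial time.

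To extract a threshold-based $p$-cause from $\mathfrak{S}^*$, I will define, for each $s\in S_p$, the threshold $T(s)$ as the least accumulated weight at which $\mathfrak{S}^*$ chooses $continue$ in $s$ (and $T(s)=\infty$ if $pick$ is optimal for all weights). The claim that a single switching threshold per state suffices rests on a monotonicity argument mirroring the intuition of \Cref{ex: counterexample monotonicity}: increasing the already-accumulated weight makes $pick$ strictly less attractive (it commits a strictly larger cost) while shifting the value of $continue$ by an additive term independent of future choices, so the optimal action in a state is monotone in the accumulated weight. Combined with the saturation bound this gives $T(s)\leq W$, which yields the claimed polynomially bounded representation of~$\Pi$.

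The main obstacle I expect is precisely this monotonicity and saturation step: verifying that the bounds of~\cite{fossacs2019}, which a priori concern optimal decisions in the weight-unfolded MDP, translate into bounds on the thresholds $T(s)$ of the induced threshold-based $p$-cause, and that the switching really is a single crossing from $pick$ to $continue$ rather than a more complex alternation. I expect this to be cleaner than in the general partial-expectation setting because of the one-shot deterministic nature of the $pick$ action in $\mdp_p(M)$, which makes the dependence of the $pick$-value on the accumulated weight exactly the identity shift.
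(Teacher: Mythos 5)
Your proposal follows essentially the same route as the paper: identify $\pexpcost$ over $p$-causes with the partial expectation over schedulers of $\mdp_p(M)$ with target $\goalst$, invoke the pseudo-polynomial algorithm and saturation point of the partial-expectation paper, and then upgrade the weight-based optimal scheduler to a single-crossing threshold per state. Two points, however, deserve correction. First, your opening dichotomy and the displayed formula are wrong as stated: a path on which the scheduler never chooses $pick$ does \emph{not} almost surely reach $\failst$ --- it may reach $\goalst$ through the original $continue$-transitions of $M$, and such a path has a prefix in $\Pi_{\mathfrak{S}}$ (the one ending at the first visit to $\goalst$) and therefore \emph{does} contribute its accumulated weight to $\pexpcost(\Pi_{\mathfrak{S}})$. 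The indicator $\mathbf{1}_{\hat\pi \text{ uses } pick}$ would drop these contributions. Your subsequent, correct identification with the partial expectation for the target $\{\goalst\}$ (which counts all arrivals at $\goalst$, picked or not) is the one the reduction actually needs, so the argument survives, but the two characterizations you give are inconsistent and the first must go. Second, the monotonicity step you flag as the main obstacle is resolved in the paper by a one-line inequality that your sketch only gestures at: if $continue$ is optimal at accumulated weight $w$, let $E$ be the partial expected weight collected from then on and $q=\Pr_s(\goal)$ under that continuation; optimality gives $E + wq \leq w$, and since $q \leq 1$ this yields $E + w'q \leq w'$ for every $w' > w$, so $continue$ remains optimal. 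Your phrase ``shifting the value of $continue$ by an additive term independent of future choices'' is misleading --- the shift is $(w'-w)q$, which does depend on the continuation via $q$; what matters is only that it is dominated by the shift $w'-w$ of the $pick$-value. With that inequality supplied, the thresholds $T(s)$ are bounded by the saturation point exactly as you describe.
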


\begin{proofsketch}
	For the pseudo-polynomial time bound we apply the techniques from \cite{fossacs2019} to optimize the partial expected cost in MDPs to the $p$-causal MDP $\mdp_p(M)$.
  It is shown in \cite{fossacs2019} that there is an optimal scheduler whose decision depends only on the current state and accumulated weight
   and that such a scheduler and its partial expectation can be computed in pseudo-polynomial time. 
   It is further shown that  a rational number $K$ can be computed in polynomial time such that for accumulated weights above $K$, an optimal scheduler has to  minimize the probability to reach $\goalst$. In our case, this means choosing the action $continue$.
  Due to the special structure of   $\mdp_p(M)$, we can further show that there is indeed a threshold $T(s)$ for each state $s$ such that action $pick$ is optimal after a path $\hat\pi$ ending in $s$ if and only if $c(\hat\pi)<T(s)$. So, a threshold-based $\pexpcost$-minimal $p$-cause can be computed in pseudo-polynomial time. Furthermore, we have $T(s)<K$ for each state $s$ and as $K$ has a polynomially bounded representation the same applies to the values $T(s)$ for all states $s$. 
\qed
\end{proofsketch}

Since the causal MDP $\mdp_p(M)$ has a comparatively simple form, one could expect that one can do better than the pseudo-polynomial algorithm obtained by reduction to \cite{fossacs2019}. Nevertheless, in the remainder of this section we argue that computing a $\pexpcost$-minimal $p$-cause is computationally hard, in contrast to $\expcost$ (cf. \Cref{thm: complexity expcost}). For this we recall that the complexity class $\pp$ \cite{Gill1977} is characterized as the class of languages $\lang$ that have a probabilistic polynomial-time bounded Turing machine $M_{\lang}$ such that for all words $\tau$ one has $\tau \in \lang$ if and only if $M_{\lang}$ accepts $\tau$ with probability at least $1/2$ (cf. \cite{HaaseK15}).
We will use polynomial Turing reductions, which, in contrast to many-one reductions, allow querying an oracle that solves the problem we reduce to a polynomial number of times.
A polynomial time algorithm for a problem that is $\pp$-hard under polynomial Turing reductions would imply that the polynomial hierarchy collapses~\cite{Toda1991}.
We reduce the $\pp$-complete cost-problem stated in \cite[Theorem 3]{HaaseK14} to the problem of computing $\pexpcost^{\min}$.

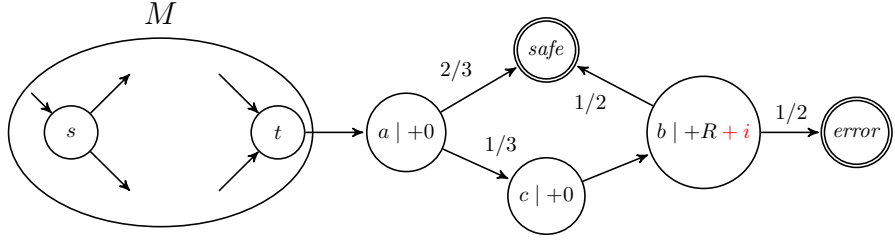
\begin{figure}[tbp]
	\centering
	\begin{tikzpicture}[->,>=stealth',shorten >=1pt,auto,node distance=0.5cm, semithick]
	
	\node[scale=0.8, state] (s) {$s$};
	\node[scale=0.8, state] (t) [right = 2 of s]{$t$};
	\node[scale=0.8, state] (ai) [right = 0.8 of t]{$a \mid +0$};
	\node[scale=0.8, state] (ci) [right = 0.8 of ai, yshift=-30pt]{$c \mid +0$};
	\node[scale=0.8, state, accepting] (fail) [above = 1 of ci]{$\failst$};
	\node[scale=0.8, state] (bi) [right = 0.8 of ci,yshift=30pt]{$b \mid +R \color{red} +i$};
	\node[scale=0.8, state, accepting] (goal) [right = 0.8 of bi] {$\goalst$};
	\node[res, scale=5] (M) [left = 0.65 of ai]{};
	\node[scale=1] (M text) [above = 0.01 of M]{\large $M$};

	\draw[<-] (s) -- ++(-0.55,0.55);
	\draw[->] (s) -- ++(0.8,0.8);
	\draw[->] (s) -- ++(0.8,-0.8);
	\draw[<-] (t) -- ++(-0.8,0.8);
	\draw[<-] (t) -- ++(-0.8,-0.8);
	\draw (t) -- (ai) node[pos=0.5,scale=0.8] {};
	\draw (ai) -- (fail) node[pos=0.5,scale=0.8] {$2/3$};
	\draw (ai) -- (ci) node[pos=0.5,scale=0.8] {$1/3$};
	\draw (ci) -- (bi) node[pos=0.5,scale=0.8] {};
	\draw (bi) -- (fail) node[pos=0.5,scale=0.8] {$1/2$};
	\draw (bi) -- (goal) node[pos=0.5,scale=0.8] {$1/2$};
	\end{tikzpicture}
	\caption{The DTMCs $N_i$ for $i=0,1$} \label{fig:sketch PP reduction}
\end{figure}

\begin{restatable}{theorem}{pexpcostPP}\label{thm: expcost0 PP-hard}
	Given an acyclic DTMC $M$, a weight function $c \colon S \to \mathbb{N}$ and a rational $\vartheta \in \mathbb{Q}$, deciding whether $\pexpcost^{\min}\leq \vartheta$ is $\pp$-hard under Turing reductions.
\end{restatable}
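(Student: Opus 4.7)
\begin{proofsketch}
The plan is a Turing reduction from the $\pp$-complete cost problem of \cite{HaaseK14}, which for an acyclic DTMC $M$ with non-negative integer weights and a rational $R$ asks whether $\Pr_M(\text{accumulated weight} > R) \geq 1/2$. Given such an instance, I would build the two DTMCs $N_0$ and $N_1$ of \Cref{fig:sketch PP reduction}: both embed $M$ so that all its paths terminate at a fresh state $t$, which deterministically enters $a$, then branches with probability $1/3$ to $c$ and $2/3$ to $\failst$, with $c$ proceeding deterministically to $b$ and $b$ splitting with equal probability to $\goalst$ and $\failst$. States $a$, $c$, $\goalst$, $\failst$ carry weight $0$, while $b$ has weight $R$ in $N_0$ and $R+1$ in $N_1$.

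Next I would fix $p \in (1/6, 1/2]$, verify $\Pr_s(\goal) \leq 1/6$ for every state outside of $\{c, b, \goalst\}$, and conclude $S_p = \{c, b, \goalst\}$. Hence any $p$-cause is determined by its choice, on each finite path reaching $c$ with $M$-accumulated weight $W$, among three options: pick at $c$, pick at $b$, or continue past $b$ (picking at $\goalst$). Writing $q_W := \Pr_M(\text{weight} = W)$ and using that an execution reaches $c$ with probability $q_W/3$, the three options contribute $q_W \cdot W/3$, $q_W(W{+}R{+}i)/3$, and $q_W(W{+}R{+}i)/6$ to $\pexpcost$, respectively. The third option strictly dominates the second, and beats the first exactly when $W > R+i$. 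Therefore the optimal $p$-cause picks at $c$ for $W < R+i$ and continues past $b$ for $W > R+i$, giving
\[
  \pexpcost^{\min}(N_i) \;=\; \sum_{W<R+i} q_W \cdot \frac{W}{3} \;+\; \sum_{W \geq R+i} q_W \cdot \frac{W+R+i}{6}.
\]

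Subtracting the two expressions, every term with $W \leq R$ cancels and only the weight increment at $b$ survives, yielding
\[
  \pexpcost^{\min}(N_1) - \pexpcost^{\min}(N_0) \;=\; \tfrac{1}{6}\,\Pr_M(\text{weight} > R).
\]
Thus the cost-problem input is a yes-instance iff $\pexpcost^{\min}(N_1) \geq \pexpcost^{\min}(N_0) + 1/12$. Since each $\pexpcost^{\min}(N_i)$ is a rational with polynomially bounded bit size, two binary searches over $\vartheta$ that query the oracle $\pexpcost^{\min} \leq \vartheta$ determine both values exactly with polynomially many queries, from which the comparison—and hence the answer to the cost problem—follows. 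The main hurdle I foresee is the case analysis at $c$ and $b$: arguing that continuing past $b$ strictly dominates picking at $b$ regardless of $W$, and checking that the contributions for $W \leq R$ cancel cleanly across $N_0$ and $N_1$ so that the difference isolates exactly $\Pr_M(\text{weight} > R)$. Both relies crucially on $a$ and $c$ having weight $0$, so that $b$ is the only place where $N_0$ and $N_1$ disagree.
\end{proofsketch}
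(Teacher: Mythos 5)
Your proposal is correct and follows essentially the same route as the paper's proof: the same two gadget DTMCs $N_0,N_1$, the same case analysis showing the optimal $p$-cause picks at $c$ for small accumulated weight and continues to $\goalst$ otherwise, the same identity $\pexpcost^{\min}(N_1)-\pexpcost^{\min}(N_0)=\tfrac{1}{6}\Pr_M(\text{weight}>R)$, and the same binary search via the threshold oracle. The only (immaterial) deviations are that you state the cost problem of \cite{HaaseK14} in complemented form, which is harmless since $\pp$ is closed under complement and you recover the probability exactly, and your sign in the difference of the two $\pexpcost^{\min}$ values is in fact the correct one.
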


\begin{proofsketch}
We sketch a Turing reduction from the following problem which is shown to be $\pp$-hard in\cite{HaaseK14}:
Given an acyclic DTMC $M$ over state space $S$ with initial state $s$, absorbing state $t$ such that $\Pr_{s}(\lozenge t)=1$, weight function $c:S \to \mathbb{N}$ and natural number $R \in \mathbb{N}$, decide whether 
		\[\Pr_M(\{\pi \in \Paths(M) \mid c(\pi) \leq R\}) \geq 1/2.\]

In an acyclic Markov chain $M$ the values of $\pexpcost$ have a polynomially bounded binary representation as shown in the appendix.
This allows for a binary search to compute $\pexpcost^{\min}$ with polynomially many calls to the corresponding threshold problem. We use this procedure in a polynomial-time Turing reduction.

Let now $M$ be a Markov chain as in \cite[Theorem 3]{HaaseK14} and let $R$ be a natural number. We construct two Markov chains $N_0$ and $N_1$ depicted in Figure \ref{fig:sketch PP reduction}.
The $\pexpcost$-minimal $p$-cause in both Markov chains consists of all paths reaching $c$ with weight $\leq R$ and all paths reaching $\goalst$ that do not have a prefix reaching $c$ with weight $\leq R$.
The difference between the values $\pexpcost^{\min}$ in the two Markov chains depends only on the probability of paths in the minimal $p$-cause collecting the additional weight $+1$ in $N_1$.
This probability is $\frac{1}{6}\Pr_M(\{\pi \in \Paths(M) \mid c(\pi) \leq R\})$.
By repeatedly using the threshold problem to compute $\pexpcost^{\min}$ in $N_0$ and $N_1$ as described above, we can hence decide the problem from \cite[Theorem 3]{HaaseK14}. More details can be found in the appendix. \qed
\end{proofsketch}

\subsection{Maximal cost of a $p$-cause}

In practice, the weight function on the Markov chain potentially models resources for which the available consumption has a \emph{tight} upper bound. For example, the amount of energy a drone can consume from its battery is naturally limited. Instead of just knowing that \emph{on average} the consumption will lie below a given bound, it is therefore often desirable to find monitors whose costs are guaranteed to lie below this limit for (almost) any evolution of the system. 

\begin{definition}[Maximal cost]
	Let $\Pi$ be a $p$-cause for $\goal$ in $M$. We define the \emph{maximal cost} of $\Pi$ to be
	\begin{align*}
		\maxcost(\Pi) &= \sup \{c(\hat\pi) \mid \hat\pi \in \Pi \}.
	\end{align*}
\end{definition}

The maximal cost of a $p$-cause is a measure for the worst-case resource consumption among executions of the system. 
Therefore, by knowing the minimal value $\maxcost^{\min}$ for $p$-causes one can ensure that there will be no critical scenario arising from resource management.
\begin{restatable}{theorem}{maxcostSummary} \label{thm: maxcost}
	\begin{enumerate}
		\item Given a non-negative weight function $c:S \to \mathbb{Q}_{\geq 0}$, the canonical $p$-cause $\Theta$ is $\maxcost$-minimal and $\maxcost^{\min}$ can be computed in time polynomial in the size of $M$.
		\item For an arbitrary weight function $c:S \to \mathbb{Q}$ a $\maxcost$-minimal and state-based $p$-cause $\Pi$ and $\maxcost^{\min}$ can be computed in pseudo-polynomial time.
		\item Given a rational $\vartheta \in \mathbb{Q}$, deciding whether $\maxcost^{\min} \leq \vartheta$ is in $\np \cap \conp$. 
	\end{enumerate}
\end{restatable}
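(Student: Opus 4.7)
The plan is to prove the three statements sequentially, invoking the canonical $p$-cause for part (1) and a reduction to two-player total-payoff games for parts (2) and (3).

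For part (1), I start from the $\preceq$-minimality of $\Theta$ given in \Cref{prop: canonical path}. Let $\Pi$ be an arbitrary $p$-cause and $\pi=s_0\cdots s_n\in\Theta$. Since the only $S_p$-state on $\pi$ is $s_n$, any prefix of $\pi$ that lies in $\Pi$ must equal $\pi$ itself; hence either $\pi\in\Pi$ or no prefix of $\pi$ belongs to $\Pi$. In the latter case I extend $\pi$ to a path $\rho$ reaching $\goalst$ (using $\Pr(\goal\mid\pi)\geq p>0$); the almost-sure prefix condition of $\Pi$ produces some $\phi\in\Pi\cap\Pref(\rho)$, and since both $\pi$ and $\phi$ are prefixes of $\rho$ yet no prefix of $\pi$ is in $\Pi$, the path $\phi$ must strictly extend $\pi$. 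Non-negativity of $c$ then gives $c(\pi)\leq c(\phi)\leq\maxcost(\Pi)$, so $\maxcost(\Theta)\leq\maxcost(\Pi)$. To compute $\maxcost(\Theta)$ I check whether some non-trivial SCC inside the subgraph induced by $S\setminus S_p$, reachable from $s_0$ and from which $S_p$ is reachable, contains a state of positive weight: if yes, $\maxcost(\Theta)=\infty$; otherwise every such SCC has total weight zero and can be contracted, yielding a DAG on which the longest path from $s_0$ to $S_p$ is obtained in polynomial time by topological-order dynamic programming.

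For part (2), I recast the problem as a two-player zero-sum game on $\mdp_p(M)$: Player Min (the scheduler) chooses $pick$ or $continue$ whenever an $S_p$-state is reached, while Player Max (an adversarial version of Nature) resolves the probabilistic branching under $continue$. The payoff of a play is the accumulated weight at the moment Min picks; plays that reach $\failst$ or loop forever without a $pick$ are handled by an extremal payoff that simultaneously removes them from the maximum and forces Min towards almost-sure termination. This matches the total-payoff game model of \cite{BGHM2015,Chatterjee2017}, which enjoys memoryless determinacy on both sides and admits pseudo-polynomial value-iteration algorithms. Running such an algorithm on the induced game graph yields both $\maxcost^{\min}$ and an optimal memoryless strategy for Min, which translates back into a state-based $\maxcost$-minimal $p$-cause in $M$ via the correspondence of \Cref{lem: scheduler trafo}.

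For part (3), the same game encoding reduces the decision question $\maxcost^{\min}\leq\vartheta$ to the threshold problem for total-payoff games, whose membership in $\np\cap\conp$ is established in \cite{Chatterjee2017}. The main obstacle across parts (2) and (3) will be to pin down this reduction carefully: making the asymmetric action structure of $\mdp_p(M)$ (only $S_p$-states offer a genuine Min choice) fit the total-payoff framework, treating $\failst$-ending plays as non-contributing to the maximum, and choosing the payoff for non-terminating plays so that Min is compelled to guarantee eventual picking. Once this encoding is verified, the $\np\cap\conp$ bound transfers directly from the total-payoff game result.
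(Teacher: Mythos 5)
Your overall route coincides with the paper's: part (1) is proved by the same monotonicity-of-$\maxcost$-under-$\preceq$ argument followed by contraction of zero-weight cycles and a longest-path computation on the resulting DAG, and parts (2)--(3) go through the same reduction to a two-player max-cost reachability/total-payoff game in the sense of \cite{BGHM2015,Chatterjee2017}, with the $\np\cap\conp$ bound ultimately inherited via mean-payoff games. Part (1) of your proposal is complete and correct.

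For parts (2)--(3), however, the step you yourself single out as ``the main obstacle'' is left in an internally inconsistent state. You ask for ``an extremal payoff that simultaneously removes them [the $\failst$-reaching and never-picking plays] from the maximum and forces Min towards almost-sure termination'' --- but no single extremal value does both: assigning $-\infty$ removes such plays from the maximum but makes non-termination attractive to the minimizer, while $+\infty$ would force picking but would wrongly charge the cause for paths that legitimately end in $\failst$, which by definition of $\maxcost$ contribute nothing. The resolution the paper adopts is to assign $-\infty$ to every play avoiding $\goalst$ and to observe that nothing needs to be ``forced'' on Min: after the preprocessing every state other than $\failst$ can reach $\goalst$, so the maximizer (who resolves the probabilistic branching) can always steer any play not absorbed in $\failst$ to $\goalst$; hence under every Min strategy the attained value is exactly the supremum of the weights of the picked prefixes, i.e., $\maxcost$ of the corresponding $p$-cause, and $Val(s_0)>-\infty$. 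Two further small corrections: in the max-cost version of these games only the minimizer is guaranteed memoryless optimal strategies (the maximizer may need memory), which suffices for the state-based claim but contradicts your ``memoryless determinacy on both sides''; and for (3) an arbitrary threshold $\vartheta$ is not covered by the plain value-$\leq 0$ result --- the paper inserts an extra vertex of weight $-\vartheta$ on the reset edge from $\goalst$ back to $s_0$ before passing to the mean-payoff game.
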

\begin{proofsketch}
	To show (1) it suffices to note that for non-negative weight functions $\maxcost$ is monotonous with respect to the partial order $\preceq$ on $p$-causes.
	Therefore $\Theta$ is $\maxcost$-minimal.
	For (2) we reduce the problem to a max-cost reachability game as defined in \cite{BGHM2015}.
	The algorithm from \cite{BGHM2015} computes the lowest maximal cost and has a pseudo-polynomial time bound.
	By virtue of the fact that the minimizing player has a memoryless strategy we can compute a set of states $Q \subseteq S_p$ on which a $\maxcost$-minimal $p$-cause $\Pi$ is based upon.
	In order to show (3) we reduce the max-cost reachability game from (2) further to mean-payoff games, as seen in \cite{Chatterjee2017}.
	Mean-payoff games are known to lie in $\np \cap \conp$ \cite{Chatterjee2017}.
	\qed
\end{proofsketch}

\subsection{Instantaneous cost} 

The given weight function $c$ on states can also induce an instantaneous weight function $c_\inst:\Paths_\fin(M) \to \mathbb{Q}$ which just takes the weight of the state visited last, i.e., $c_\inst(s_0 \cdots s_n)=c(s_n)$.
This yields an alternative cost mechanism intended to model the situation where the cost of repairing or rebooting only depends on the current state, e.g., the altitude an automated drone has reached.

We add the subscript `$\inst$' to the three cost variants, where the accumulative weight function $c$ has been replaced with the instantaneous weight function $c_\inst$, the error state is replaced by an error set $E$ and the safe state is replaced by a set of terminal safe states $F$.
Thus we optimize $p$-causes for $\lozenge E$ in $M$.

\begin{restatable}{theorem}{instcost} \label{thm: instcost}
	For $\expcost_\inst$, $\pexpcost_\inst$, and $\maxcost_\inst$ a cost-minimal $p$-cause $\Pi$ and the value of the minimal cost can be computed in time polynomial in $M$.
	In all cases $\Pi$ can be chosen to be a state-based $p$-cause.
\end{restatable}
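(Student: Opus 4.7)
The unifying feature of the instantaneous setting is that $c_\inst(\hat\pi) = c(\last(\hat\pi))$, so the value a $p$-cause incurs depends only on the \emph{stopping state}---either the last state of the chosen prefix in $\Pi$, or the first $F$-state visited by executions not covered by $\Pi$. My plan is to work throughout in the $p$-causal MDP $\mdp_p(M)$, adapted in the obvious way to the reachability property $\lozenge E$ and terminal safe set $F$, and to exploit the bijection of Lemma~\ref{lem: scheduler trafo} between $p$-causes in $M$ and equivalence classes of schedulers.

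For $\expcost_\inst$ and $\pexpcost_\inst$ I would recast each as a stochastic shortest path problem with purely \emph{terminal} rewards on $\mdp_p(M)$: choosing $pick$ at a state $s \in S_p$ incurs a one-shot reward $c(s)$, while reaching some $f \in F$ via $continue$-actions yields reward $c(f)$ (for $\expcost_\inst$) or $0$ (for $\pexpcost_\inst$). Since rewards are accrued only at a single stopping step and every trajectory in $\mdp_p(M)$ almost surely reaches a terminal state, classical SSP theory (cf.\ \cite{BT91}) provides a memoryless optimal scheduler $\mathfrak{S}^*$ computable by linear programming in time polynomial in $|M|$. The set $Q = \{s \in S_p : \mathfrak{S}^*(s) = pick\}$ then induces a state-based $p$-cause realising the minimum value, exactly as in the proof of \Cref{thm: complexity expcost}(2).

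For $\maxcost_\inst$ the key observation is that $\maxcost_\inst(\Pi)$ depends only on the \emph{set} of last states of elements of $\Pi$. Hence, for any $p$-cause $\Pi$, the state-based cause with pick-set $Q_\Pi = \{\last(\hat\pi) : \hat\pi \in \Pi\}$ is itself a valid $p$-cause---every path in $\lozenge E$ covered by $\Pi$ visits $Q_\Pi$---and attains the same maximal cost, so state-based $p$-causes are optimal. I would then enumerate the distinct weights of states in $S_p$ in increasing order $v_1 < \ldots < v_k$, and for each test whether $Q_i = \{s \in S_p : c(s) \leq v_i\}$ induces a valid cause by verifying $\Pr_{s_0}(\lozenge E \wedge \neg \lozenge Q_i) = 0$, a standard polynomial-time reachability computation on $M$ with $Q_i$ made absorbing. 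Since validity is monotone in $Q$, the smallest $v_i$ passing the test yields $\maxcost_\inst^{\min}$, giving an overall polynomial-time algorithm.

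The main subtlety I expect lies in the $\maxcost_\inst$ case: since $Q_i$ need not contain all of $E$, validity of the induced state-based cause is not automatic and has to be certified by the reachability check above. Everything else reduces to classical polynomial-time computations on MDPs and Markov chains, so all three instantaneous cost measures admit polynomial-time algorithms returning a state-based cost-minimal $p$-cause.
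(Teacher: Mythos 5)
Your proposal is correct and follows essentially the same route as the paper: reduce $\pexpcost_\inst$ to $\expcost_\inst$ by zeroing the $F$-weights, solve the expected-cost cases as an SSP on (a variant of) the $p$-causal MDP with a one-shot reward at the $pick$ step --- the paper merely materialises that one-shot reward via terminal copies $\dot{S}_p$ of the $S_p$-states --- and handle $\maxcost_\inst$ by a greedy threshold sweep over the weights of $S_p$ with a reachability check after making the candidate pick-set absorbing. The only nitpick is that the state-based cause induced by $Q_\Pi$ attains \emph{at most} (not exactly) the maximal cost of $\Pi$, but that inequality is all your argument needs.
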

\begin{proofsketch}
	We first note that $\pexpcost_\inst$ can be reduced to $\expcost_\inst$ by setting the weight of all states in $F$ to $0$.
	We then construct an MDP (different from $\mdp_p(M)$) which emulates the instantaneous weight function using an accumulating weight function.
	Thus, finding an $\expcost_\inst$-minimal $p$-cause $\Pi$ reduces to the SSP from \cite{BT91}, which can be solved in polynomial time.
	The solution admits a memoryless scheduler and thus $\Pi$ is state-based in this case.
	
	For $\maxcost_\inst$ we order the states in $S_p$ by their cost and then start iteratively removing the states with lowest cost until $E$ is not reachable anymore.
	The set $Q$ of states which where removed induce a state-based $\maxcost_\inst$-minimal $p$-cause $\Pi$.
	This gives us a polynomial time procedure to compute $\maxcost_\inst^{\min}$ and $\Pi$.
	\qed
\end{proofsketch}

\section{Conclusion}
We combined  the counterfactuality principle and the probability-raising property into the notion of $p$-causes in DTMCs. 
In order to find suitable $p$-causes we defined different cost models and gave algorithms to compute corresponding cost-minimal causes.

Cyber-physical systems are often not fully probabilistic, but involve a certain amount of control in form of decisions depending on the system state.
Such systems can be modeled by MDPs, to which we intend to generalize the causality framework presented here.
Our approach also assumes that the probabilistic system described by the Markov chain is fully observable. By observing execution traces instead of paths of the system, generalizing the notion of $p$-causes to hidden Markov models is straightforward. However, the corresponding computational problems exhibit additional difficulties which we address in future work.

\bibliographystyle{splncs04}
\bibliography{lit2}

\begin{thebibliography}{10}
\providecommand{\url}[1]{\texttt{#1}}
\providecommand{\urlprefix}{URL }
\providecommand{\doi}[1]{https://doi.org/#1}

\bibitem{BaierK2008}
Baier, C., Katoen, J.P.: Principles of Model Checking (Representation and Mind
  Series). The MIT Press, Cambridge, MA (2008)

\bibitem{RV11}
Bartocci, E., Grosu, R., Karmarkar, A., Smolka, S.A., Stoller, S.D., Zadok, E.,
  Seyster, J.: {Adaptive Runtime Verification}. In: Runtime Verification. pp.
  168--182. Springer Berlin Heidelberg (2013).
  \doi{10.1007/978-3-642-35632-2\_18}

\bibitem{BeerBCOT09}
Beer, I., Ben-David, S., Chockler, H., Orni, A., Trefler, R.: Explaining
  counterexamples using causality. In: Computer Aided Verification (CAV'09).
  pp. 94--108. Springer Berlin Heidelberg (2009).
  \doi{10.1007/978-3-642-02658-4\_11}

\bibitem{BT91}
Bertsekas, D.P., Tsitsiklis, J.N.: An analysis of stochastic shortest path
  problems. Math. Oper. Res.  \textbf{16}(3),  580–595 (1991)

\bibitem{BrahamvanHees2012}
Braham, M., van Hees, M.: An anatomy of moral responsibility. Mind  \textbf{121
  (483)},  601--634 (2012)

\bibitem{BGHM2015}
Brihaye, T., Geeraerts, G., Haddad, A., Monmege, B.: {To Reach or not to Reach?
  Efficient Algorithms for Total-Payoff Games}. In: {Proceedings of the 26th
  International Conference on Concurrency Theory (CONCUR'15)}. LIPIcs, vol.~42,
  pp. 297--310 (2015). \doi{10.4230/LIPIcs.CONCUR.2015.297}

\bibitem{ChadhaSV09}
Chadha, R., Sistla, A.P., Viswanathan, M.: On the expressiveness and complexity
  of randomization in finite state monitors. J. ACM  \textbf{56}(5) (2009).
  \doi{10.1145/1552285.1552287}

\bibitem{Chatterjee2017}
Chatterjee, K., Doyen, L., Henzinger, T.A.: {The Cost of Exactness in
  Quantitative Reachability}. In: Models, Algorithms, Logics and Tools. pp.
  367--381. Springer International Publishing, Cham (2017).
  \doi{10.1007/978-3-319-63121-9\_18}

\bibitem{ChocklerH04}
Chockler, H., Halpern, J.Y.: {Responsibility and Blame: A Structural-Model
  Approach}. J. Artif. Int. Res.  \textbf{22}(1),  93–115 (Oct 2004)

\bibitem{ChocklerHK2008}
Chockler, H., Halpern, J.Y., Kupferman, O.: What causes a system to satisfy a
  specification? {ACM} Transactions on Computational Logic  \textbf{9}(3),
  20:1--20:26 (2008)

\bibitem{CF2015}
Cini, C., Francalanza, A.: {An LTL Proof System for Runtime Verification}. In:
  Baier, C., Tinelli, C. (eds.) Tools and Algorithms for the Construction and
  Analysis of Systems. pp. 581--595. Springer Berlin Heidelberg (2015)

\bibitem{CormenLRS2009}
Cormen, T.H., Leiserson, C.E., Rivest, R.L., Stein, C.: Introduction to
  Algorithms, Third Edition. The MIT Press, 3rd edn. (2009)

\bibitem{DacaHKP16}
Daca, P., Henzinger, T.A., K{\v{r}}et{\'i}nsk{\'y}, J., Petrov, T.: {Faster
  Statistical Model Checking for Unbounded Temporal Properties}. In: Tools and
  Algorithms for the Construction and Analysis of Systems. pp. 112--129.
  Springer Berlin Heidelberg (2016). \doi{10.1007/978-3-662-49674-9\_7}

\bibitem{DVJ2013}
Dash, D., Voortman, M., De~Jongh, M.: {Sequences of Mechanisms for Causal
  Reasoning in Artificial Intelligence}. In: Proceedings of the 23rd
  International Joint Conference on Artificial Intelligence. p. 839–845.
  IJCAI '13, AAAI Press (2013)

\bibitem{Eells91}
Eells, E.: Probabilistic Causality. Cambridge Studies in Probability, Induction
  and Decision Theory, Cambridge University Press (1991)

\bibitem{EiterL2004}
Eiter, T., Lukasiewicz, T.: Complexity results for explanations in the
  structural-model approach. Artificial Intelligence  \textbf{154}(1-2),
  145--198 (2004)

\bibitem{EiterL2006}
Eiter, T., Lukasiewicz, T.: Causes and explanations in the structural-model
  approach: Tractable cases. Artificial Intelligence  \textbf{170}(6-7),
  542--580 (2006)

\bibitem{EsparzaKKW20}
Esparza, J., Kiefer, S., Kretinsky, J., Weininger, M.: Online monitoring
  $\omega$-regular properties in unknown {Markov} chains. Arxiv preprint,
  2010.08347  (2020)

\bibitem{FaranK2018}
Faran, R., Kupferman, O.: Spanning the spectrum from safety to liveness. Acta
  Informatica  \textbf{55}(8),  703--732 (2018).
  \doi{10.1007/s00236-017-0307-4}

\bibitem{FeigenbaumHJJWW11}
Feigenbaum, J., Hendler, J.A., Jaggard, A.D., Weitzner, D.J., Wright, R.N.:
  {Accountability and Deterrence in Online Life}. In: Proceedings of WebSci
  '11. ACM, New York, NY, USA (2011). \doi{10.1145/2527031.2527043}

\bibitem{Fenton-Glynn2016}
Fenton-Glynn, L.: {A Proposed Probabilistic Extension of the Halpern and Pearl
  Definition of ‘Actual Cause’}. The British Journal for the Philosophy of
  Science  \textbf{68}(4),  1061--1124 (2016)

\bibitem{Gill1977}
Gill, J.: Computational complexity of probabilistic {Turing} machines. SIAM
  Journal on Computing  \textbf{6}(4),  675--695 (1977). \doi{10.1137/0206049}

\bibitem{GondiPS09}
Gondi, K., Patel, Y., Sistla, A.P.: {Monitoring the Full Range of
  $\omega$-Regular Properties of Stochastic Systems}. In: Proceedings of
  VMCAI'09:. pp. 105--119. Springer Berlin Heidelberg (2009).
  \doi{10.1007/978-3-540-93900-9\_12}

\bibitem{HaaseK14}
Haase, C., Kiefer, S.: {The Odds of Staying on Budget}. In: Automata,
  Languages, and Programming. pp. 234--246. Springer Berlin Heidelberg (2015).
  \doi{10.1007/978-3-662-47666-6\_19}

\bibitem{HaaseK15}
Haase, C., Kiefer, S.: {The Complexity of the Kth Largest Subset Problem and
  Related Problems}. Inf. Process. Lett.  \textbf{116}(2),  111–115 (2016).
  \doi{10.1016/j.ipl.2015.09.015}

\bibitem{Halpern15}
Halpern, J.Y.: {A Modification of the Halpern-Pearl Definition of Causality}.
  In: Proceedings of IJCAI'15. p. 3022–3033. AAAI Press (2015)

\bibitem{HalpernP2001}
Halpern, J.Y., Pearl, J.: {Causes and Explanations: {A} Structural-Model
  Approach: Part 1: Causes}. In: Proceedings of the 17th Conference in
  Uncertainty in Artificial Intelligence ({UAI}). pp. 194--202 (2001)

\bibitem{HuangK15}
Huang, Y., Kleinberg, S.: {Fast and Accurate Causal Inference from Time Series
  Data}. In: Proceedings of FLAIRS 2015. pp. 49--54. {AAAI} Press (2015)

\bibitem{IbrahimP20}
Ibrahim, A., Pretschner, A.: {From Checking to Inference: Actual Causality
  Computations as Optimization Problems}. In: Proceedings of ATVA'20. pp.
  343--359. Springer Internat. Publishing, Cham (2020).
  \doi{10.1007/978-3-030-59152-6\_19}

\bibitem{IbrahimPTESA20}
Ibrahim, A., Pretschner, A., Klesel, T., Zibaei, E., Kacianka, S., Pretschner,
  A.: {Actual Causality Canvas: A General Framework for Explanation-Based
  Socio-Technical Constructs}. In: Proceedings of ECAI'20. pp. 2978 -- 2985.
  IOS Press Ebooks (2020). \doi{10.3233/FAIA200472}

\bibitem{RV13}
Kalajdzic, K., Bartocci, E., Smolka, S.A., Stoller, S.D., Grosu, R.: {Runtime
  Verification with Particle Filtering}. In: Runtime Verification. pp.
  149--166. Springer Berlin Heidelberg (2013).
  \doi{10.1007/978-3-642-40787-1\_9}

\bibitem{Kleinberg2011}
Kleinberg, S.: {A Logic for Causal Inference in Time Series with Discrete and
  Continuous Variables}. In: Proceedings of IJCAI'11. pp. 943--950 (2011)

\bibitem{KleinbergH11}
Kleinberg, S., Hripcsak, G.: A review of causal inference for biomedical
  informatics. J Biomed Inform.  \textbf{44}(6),  1102--12 (2011).
  \doi{10.1016/j.jbi.2011.07.001}

\bibitem{KleinbergM2009}
Kleinberg, S., Mishra, B.: {The Temporal Logic of Causal Structures}. In:
  Proceedings of the Twenty-Fifth Conference on Uncertainty in Artificial
  Intelligence ({UAI}). pp. 303--312 (2009)

\bibitem{KleinbergM2010}
Kleinberg, S., Mishra, B.: {The Temporal Logic of Token Causes}. In:
  Proceedings of KR'10. p. 575–577. AAAI Press (2010)

\bibitem{Miller17}
Miller, T.: {Explanation in Artificial Intelligence: Insights from the Social
  Sciences}. Artificial Intelligence  \textbf{267} (06 2017).
  \doi{10.1016/j.artint.2018.07.007}

\bibitem{Pearl09}
Pearl, J.: Causality. Cambridge University Press, 2nd edn. (2009)

\bibitem{fossacs2019}
Piribauer, J., Baier, C.: Partial and conditional expectations in {M}arkov
  decision processes with integer weights. In: Proceedings of FoSSaCS'19.
  Lecture Notes in Computer Science, vol. 11425, pp. 436--452. Springer (2019)

\bibitem{Reichenbach56}
Reichenbach, H.: The Direction of Time. Dover Publications (1956)

\bibitem{SistlaS08}
Sistla, A.P., Srinivas, A.R.: {Monitoring Temporal Properties of Stochastic
  Systems}. In: Logozzo, F., Peled, D.A., Zuck, L.D. (eds.) Verification, Model
  Checking, and Abstract Interpretation. pp. 294--308. Springer Berlin
  Heidelberg (2008)

\bibitem{RV12}
Stoller, S.D., Bartocci, E., Seyster, J., Grosu, R., Havelund, K., Smolka,
  S.A., Zadok, E.: {Runtime Verification with State Estimation}. In: Runtime
  Verification. pp. 193--207. Springer Berlin Heidelberg (2012).
  \doi{10.1007/978-3-642-29860-8\_15}

\bibitem{Toda1991}
Toda, S.: {{PP}} is as {{Hard}} as the {{Polynomial}}-{{Time Hierarchy}}. SIAM
  Journal on Computing  \textbf{20}(5),  865--877 (1991). \doi{10.1137/0220053}

\bibitem{VennekensBD10}
Vennekens, J., Bruynooghe, M., Denecker, M.: {Embracing Events in Causal
  Modelling: Interventions and Counterfactuals in CP-Logic}. In: Logics in
  Artificial Intelligence. pp. 313--325. Springer Berlin Heidelberg (2010)

\bibitem{VennekensDB09}
Vennekens, J., Denecker, M., Bruynooghe, M.: {CP-logic: A language of causal
  probabilistic events and its relation to logic programming}. Theory and
  Practice of Logic Programming  \textbf{9}(3),  245–308 (2009).
  \doi{10.1017/S1471068409003767}

\bibitem{ZhengK2017}
Zheng, M., Kleinberg, S.: {A Method for Automating Token Causal Explanation and
  Discovery}. In: Proceedings of FLAIRS'17 (2017)

\end{thebibliography}

\section{Appendix}

We will denote nondeterministic finite automata as tuples $\A=(Q, \Sigma,Q_0,T,F)$, where $Q$ is the set of states, $\Sigma$ is the alphabet, $Q_0\subseteq Q$ is the initial state, $T\colon Q\times\Sigma\to 2^Q$ is the transition function, and $F\subseteq Q$ are the final states. A transition $q'\in T(q,\alpha)$ is also denoted by $q\overset{\alpha}{\mapsto} q'$.

\canonicalpath*
\begin{proof}
	The first condition $s_n \in S_p = \{s \in S \mid \Pr_s(\goal) \geq p\}$ ensures that every path in $\Theta$ is a $p$-critical prefix for $\goal$ in $M$.
	The second condition ensures that no proper prefix is a $p$-critical prefix, and therefore $\Theta$ is prefix-free. Clearly, every path reaching $\goalst$ passes through $S_p$ since $\goalst$ is itself an element in $S_p$. These properties together show that $\Theta$ is a $p$-cause.
	
	Let $\Pi$ be another $p$-cause for $\goal$ in $M$ and consider an arbitrary path $s_0 
	\ldots s_n \in \Pi$.
	This means $s_n \in S_p$.
	If we have for all $i<n$ that $\Pr_{s_i}(\goal)<p$, then $s_0 \ldots s_n \in \Theta$ and there is nothing to prove.
	Therefore assume that there exists $i<n$ such that $s_i \in S_p$.
	For minimal such $i$ we have $s_0 \ldots  s_i \in \Theta$ and thus $\Theta \preceq \Pi$.
	
	To prove that $\Theta$ is regular, consider the deterministic finite automaton $\A=(Q, \Sigma,q_0,T,F)$ for $Q=S$, $\Sigma=S\cup\{\iota\}$, $q_0=\iota$, and final states $F=S_p$. The transition relation is given by
	\[T= \{\iota \iotato s_0\}\cup\{s \sto s' \mid P(s,s')>0 \wedge s \notin S_p \}\] 
	i.e., all states in $S_p$ are terminal.
	We argue that this automaton describes the language $\Theta$:

	If $s_0 \in S_p$ then there are no transitions possible and $\lang_\A=\{s_0\}$ and $\Theta = \{s_0\}$ by definition.
	
	Let $\lang_\A \subseteq \Theta$ and let $s_0 \ldots s_n$ be a word accepted by $\A$.
	By definition we have $s_n \in S_p$
	For any state $s \in S_p$ visited by $s_0 \cdots s_n$ we know that $s=s_n$ since $S_p$ is terminal.
	This means $\forall i<n. \; \Pr_{s_i}(\goal) < p$ and thus $s_0 \ldots s_n \in \Theta$.
	
	Now let $\lang_\A \supseteq \Theta$ and $s_0 \ldots s_n \in \Theta$.
	A run for $s_0 \ldots s_n$ in $\A$ is possible, since we have an edge for every transition in $M$ except for outgoing edges of $S_p$.
	These exceptions are avoided, since for $i<n$ we have $\Pr_{s_i}(\goal) < p$ and thus no state but the last one is in $S_p$.
	On the other hand $s_n$ is in $S_p$ by assumption and thus the word $s_0 \ldots s_n$ is accepted by $\A$.
	\qed
\end{proof}

In the following proof, let $\Cyl(\Pi) = \bigcup_{\pi\in\Pi} \Cyl(\pi)$ for $\Pi\subseteq\Paths_\fin(M)$.

\expcostComplexity*
\begin{proof}
	To show (1), we claim that for two $p$-causes $\Pi$ and $\Phi$ with $\Pi\preceq \Phi$ we have $\expcost(\Pi) \leq \expcost(\Phi)$. 
	Let $\mathcal{X}_\Pi$ and $\mathcal{X}_\Phi$ be the corresponding random variables of $\expcost(\Pi)$ and $\expcost(\Phi)$, respectively.
	We prove $\expcost(\Pi) \leq \expcost(\Phi)$ by showing for almost all $\pi \in \Paths(M)$ that $\mathcal{X}_\Pi(\pi) \leq \mathcal{X}_\Phi(\pi)$. In the argument below we ignore cases that have measure $0$.
	
	From $\Pi\preceq \Phi$ it follows that $\Cyl(\Pi) \supseteq \Cyl(\Phi)$. We now deal separately with the three cases obtained by the partition
	\[ \Paths(M)= \Cyl(\Phi) \ \dot \cup \ \Cyl(\Pi) \backslash \Cyl(\Phi) \ \dot \cup \ \Paths(M) \backslash \Cyl(\Pi). \]
	
	For $s_0s_1 \ldots \in \Cyl(\Phi)$ both random variables consider the unique prefix of $s_0s_1 \ldots$ in the respective $p$-cause.
	For any $s_0 \ldots s_n \in \Phi$ we know by definition of $\preceq$ that there is $s_0 \ldots s_m \in \Pi$ with $m \leq n$.
	Therefore
	\[ \mathcal{X}_\Pi(s_0s_1 \ldots)=c(s_0 \ldots s_m) \leq c(s_0 \ldots s_n)= \mathcal{X}_\Phi(s_0s_1 \ldots). \]
	
	For $s_0s_1 \ldots \in \Cyl(\Pi) \backslash \Cyl(\Phi)$ the random variable $\mathcal{X}_\Pi$ takes the unique prefix $s_0\ldots s_m$ in $\Pi$, whereas $\mathcal{X}_\Phi$ takes the path $s_0 \ldots s_n$ such that $s_n = \failst$.
	Since $\Pr_{\failst}(\lozenge\goalst) = 0$, we have almost surely that $m<n$, and therefore
	\[ \mathcal{X}_\Pi(s_0s_1 \ldots)=c(s_0 \ldots s_m) \leq c(s_0 \ldots s_n)= \mathcal{X}_\Phi(s_0s_1 \ldots). \]
	
	For $s_0s_1 \ldots \in \Paths(M) \backslash \Cyl(\Pi)$ both random variables evaluate the same path and therefore $\mathcal{X}_\Pi(s_0s_1 \ldots)= \mathcal{X}_\Phi(s_0s_1 \ldots)$.
	Therefore we have for any $\pi \in \Paths(M)$ that $\mathcal{X}_\Pi(\pi) \leq \mathcal{X}_\Phi(\pi)$.
	
	To compute $\expcost^{\min}=\expcost(\Theta)$ consider the DTMC $M$ but change the transitions such that every state in $S_p$ is terminal.
	The value $\expcost(\Theta)$ is then the expected reward $\ExpRew(s_0 \models \lozenge S_p \cup \{\failst\})$ defined in \cite[Definition 10.71]{BaierK2008}. Expected rewards in DTMCs can be computed in polynomial time via a classical linear equation system \cite[Section 10.5.1]{BaierK2008}.
	
	We show (2) by reducing the problem of finding an $\expcost$-minimal $p$-cause to the stochastic shortest path problem (SSP) \cite{BT91} in $\mdp_p(M)$. For a scheduler $\mathfrak{S}$ of of $\mdp_p(M)$ denote the corresponding $p$-cause for $\goal$ in $M$ by $\Pi_{\mathfrak{S}}$ (cf. \Cref{lem: scheduler trafo}). In \cite{BT91} the authors define the value \emph{expected cost} of a scheduler $\mathfrak{S}$ in an MDP. This value with respect to the target set $\{\goalst, \failst\}$ coincides with $\expcost(\Pi_{\mathfrak{S}})$.
	
	The SSP asks to find a scheduler $\mathfrak{S}^*$ in $\mdp_p(M)$ minimizing the expected cost. It is shown in \cite{BT91} that a memoryless such $\mathfrak{S}^*$ and the value $\expcost^{\min}$ can be
	computed in polynomial time. The scheduler $\mathfrak{S}^*$  corresponds to an $\expcost$-minimal state-based $p$-cause $\Pi$.
	\qed
\end{proof}

\pexpcostPseudoP*
\begin{proof}
	Consider a scheduler $\mathfrak{S}$ of $\mdp_p(M)$ with weight function $c$ and recall that $\mdp_p(M)_\mathfrak{S}$ denotes Markov chain induced by $\mathfrak{S}$. Define 
	the random variable $\oplus_\mathfrak{S}  \goalst: \Paths(\mdp_p(M)_\mathfrak{S}) \to \mathbb{Q}$
	\[ \oplus_\mathfrak{S} \goalst(\pi)= \begin{cases}c(\pi)& \text{if } \pi \in \goal \\
	0 & \text{otherwise.}
	\end{cases} \]
	The \emph{partial expectation} ${\mathbb{PE}}^{\mathfrak{S}}$ of a scheduler $\mathfrak{S}$ in $\mdp_p(M)$ is defined as the expected value of $\oplus_\mathfrak{S} \goalst$.
	The minimal partial expectation is ${\mathbb{PE}}^{\min}= \inf_{\mathfrak{S}} {\mathbb{PE}}^{\mathfrak{S}}$.
	It is known that there is a scheduler obtaining the minimal partial expectation \cite{fossacs2019}.
	
	Then a $p$-cause $\Pi$ and the corresponding scheduler $\mathfrak{S}_{\Pi}$ satisfy $\pexpcost(\Pi)= \mathbb{PE}^{\mathfrak{S}_{\Pi}}$.
	A cost-minimal  scheduler for the partial expectation in an MDP with non-negative weights can be computed in pseudo-polynomial time by \cite{fossacs2019}.
	In this process we also compute $\pexpcost^{\min}={\mathbb{PE}}^{\min}$.
	
	Furthermore, we can show that once the action $continue$ is optimal in a state $s$ with accumulated weight $w$, it is also optimal for all weights $w'>w$:
	Suppose choosing $continue$ is optimal in some state $s$ if the accumulated weight is $w$. 
	Let $E$ be the partial expected accumulated weight that the optimal scheduler collects from then on and let $q = \Pr_{s}(\goal)$.
	The optimality of $continue$ implies that $E+w\cdot q \leq w$. For all $w^\prime>w$, this implies $E+w^\prime\cdot q \leq w^\prime$ as well.
	We conclude the existence of $T: S \to \mathbb{Q}$ such that $continue$ is optimal if and only if the accumulated weight is at least $T(s)$.
	If $pick$ is not enabled in a state $s$, we have $T(s) = 0$. 
	Therefore $\Pi$ is a threshold-based $p$-cause defined by $T$.
	As shown in \cite{fossacs2019}, there is a saturation point $K$ such that schedulers minimizing the partial expectation can be chosen to behave memoryless as soon as the accumulated weight exceeds $K$. This means that $T(s)$ can be chosen to be  either at most $K$ or $\infty$ for each state $s$.
	The saturation point $K$ and hence all thresholds $T(s)$ have a polynomial representation. 
	\qed
\end{proof}

\pexpcostPP*
\begin{proof}
		\begin{figure}[t]
		\centering
		\begin{tikzpicture}[->,>=stealth',shorten >=1pt,auto,node distance=0.5cm, semithick]
			
			\node[scale=0.8, state] (s) {$s$};
			\node[scale=0.8, state] (t) [right = 2 of s]{$t$};
			\node[scale=0.8, state] (ai) [right = 0.8 of t]{$a_i \mid +0$};
			\node[scale=0.8, state] (ci) [right = 0.8 of ai, yshift=-30pt]{$c_i \mid +0$};
			\node[scale=0.8, state, accepting] (fail) [above = 1 of ci]{$\failst$};
			\node[scale=0.8, state] (bi) [right = 0.8 of ci,yshift=30pt]{$b_i \mid +R \color{red} +i$};
			\node[scale=0.8, state, accepting] (goal) [right = 0.8 of bi] {$\goalst$};
			\node[res, scale=5] (M) [left = 0.65 of ai]{};
			\node[scale=1] (M text) [above = 0.01 of M]{\large $M$};

			\draw[<-] (s) -- ++(-0.55,0.55);
			\draw[->] (s) -- ++(0.8,0.8);
			\draw[->] (s) -- ++(0.8,-0.8);
			\draw[<-] (t) -- ++(-0.8,0.8);
			\draw[<-] (t) -- ++(-0.8,-0.8);
			\draw (t) -- (ai) node[pos=0.5,scale=0.8] {};
			\draw (ai) -- (fail) node[pos=0.5,scale=0.8] {$2/3$};
			\draw (ai) -- (ci) node[pos=0.5,scale=0.8] {$1/3$};
			\draw (ci) -- (bi) node[pos=0.5,scale=0.8] {};
			\draw (bi) -- (fail) node[pos=0.5,scale=0.8] {$1/2$};
			\draw (bi) -- (goal) node[pos=0.5,scale=0.8] {$1/2$};
		\end{tikzpicture}
		\caption{The DTMCs $N_i$ for $i=0,1$} \label{fig: PP reduction}
	\end{figure}
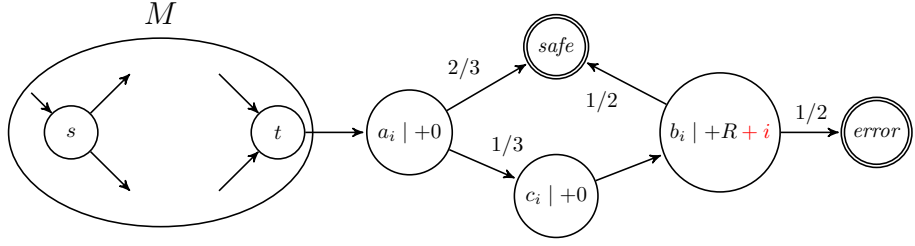
	We provide a Turing reduction from  the following problem that is shown to be $\pp$-hard in  \cite{HaaseK14}:
Given an acyclic DTMC $M$ over state space $S$ with initial state $s$, absorbing state $t$ such that $\Pr_{s}(\lozenge t)=1$, weight function $c:S \to \mathbb{N}$ and natural number $R \in \mathbb{N}$, decide whether 
		\[\Pr_M(\{\varphi \in \Paths(M) \mid c(\pi) \leq R\}) \geq \frac{1}{2}.\]

	Given such an acyclic DTMC $M$ we construct the  two DTMCs $N_0$ and $N_1$ depicted in Figure \ref{fig: PP reduction}.
	We consider the 
	 $\pexpcost$-minimal $p$-causes for $p=1/2$ in $N_i$ for $\goal$ and $i \in \{0,1\}$.
	 Suppose a path $\pi$ arrives at state $c_i$ with probability $\Pr_s(\pi)$ and accumulated weight $w$. We have to decide whether the path $\pi$ or the extension 
	 $\pi^\prime=\pi b_i \goalst$ should be included in the cost-minimal $p$-cause. The path $\pi^\prime$ has weight $w+R+i$ and probability $\Pr_s(\pi)/2$. We observe that $\pi$ is the optimal choice if 
	 \[
	 \Pr_s(\pi)\cdot w\leq \Pr_s(\pi)\cdot \frac{w+R+i}{2}.
	 \]
	 This is the case if and only if $w\leq R+i$. If $i=1$, and $w=R+i$, both choices are equally good and we decide to include the path $\pi$ in this case.
	 Hence, the $\pexpcost$-minimal $p$-causes for $p=1/2$ in $N_0$
	  is
	\begin{align*}
	\Pi_0=&\{\hat\pi \in \Paths_\fin(N_0) \mid \last(\hat\pi) = c_0 \text{ and } c(\hat\pi) \leq R\} \cup \\
	&\{\hat\pi\ \in \Paths_\fin(N_0) \mid \last(\hat\pi) = \goalst \text{ and } c(\hat\pi) > 2R\}.
	\end{align*}
	Similarly in $N_1$ the following $p$-cause is $\pexpcost$-minimal:
	\begin{align*}
	\Pi_1=&\{\hat\pi \in \Paths_\fin(N_1) \mid \last(\hat\pi) = c_1\text{ and } c(\hat\pi) \leq R\} \cup \\
	&\{\hat\pi\ \in \Paths_\fin(N_1) \mid \last(\hat\pi) = \goalst \text{ and } c(\hat\pi) > 2R+1\}.
	\end{align*}
	Therefore we have
	\begin{align*}
	3 \cdot \pexpcost_{N_0}(\Pi_0) =& \sum_{\varphi \in \Paths_\fin(M), c(\varphi) \leq R} \Pr(\varphi)c(\varphi)\\
	&+\sum_{\varphi \in \Paths_\fin(M), c(\varphi)> R} \Pr(\varphi)\frac{c(\varphi)+R}{2},\\
	3 \cdot \pexpcost_{N_1}(\Pi_1) =& \sum_{\varphi \in \Paths_\fin(M), c(\varphi) \leq R} \Pr(\varphi)c(\varphi)\\
	&+\sum_{\varphi \in \Paths_\fin(M), c(\varphi) > R} \Pr(\varphi)\frac{c(\varphi)+R+1}{2}.
	\end{align*}
	We conclude that
	\begin{align*}
	&\pexpcost_{N_0}^{\min} - \pexpcost_{N_1}^{\min} \\
	=&\pexpcost_{N_0}(\Pi_0)-\pexpcost_{N_1}(\Pi_1) \\
	=&\frac{1}{6}\cdot\Pr_M(\{\varphi \in \Paths(M) \mid c(\varphi) > R\})\\
	=&\frac{1}{6}\cdot(1-\Pr_M(\{\varphi \in \Paths(M) \mid c(\varphi) \leq R\}))
	\end{align*}
	In the sequel we prove that we can use an oracle for the threshold problem for $\pexpcost^{\min}$ to compute the values
	\[\pexpcost_{N_0}^{\min} \qquad \text{and} \qquad  \pexpcost_{N_1}^{\min}.\] 
	This in turn allows us to compute $\frac{1}{6}\Pr_M(\{\hat\pi \in \Paths(M) \mid c(\hat\pi) \leq R\})$ and thus to decide the problem from \cite{HaaseK14}.
	
	In any acyclic Markov chain $K$, the following holds:
	We assume that the transition probabilities are encoded as fractions of coprime integers. Therefore we can compute the product of all denominators to get a number $L$ in polynomial time and with polynomially bounded encoding. The maximal weight $W$ of a path in $K$ can be computed in linear time and has a polynomially bounded encoding. Therefore the minimal $\pexpcost$ is an integer multiple of $1/L$ and there are at most $W \cdot L$ many different values for $\pexpcost$. This in particular applies to the value of the $\pexpcost$-optimal $p$-cause.
	
	Note that there are still exponentially many possible values for the minimal partial expected costs in both Markov chains $N_i$.
	A binary search over all possible values  with polynomially many applications of the threshold problem:
	\begin{center}
		Is $\pexpcost^{\min} \leq \vartheta$ for a rational $\vartheta \in \mathbb{Q}$?
	\end{center}
	is possible, nevertheless. We therefore can apply this binary search to find the exact value $\pexpcost_{N_i}(\Pi)$ for both DTMCs $N_i (i=0,1)$ by solving polynomially many instances of the threshold problem.
	This gives us a polynomial Turing reduction to the problem stated in \cite{HaaseK14}. \qed
\end{proof}

\maxcostSummary*
\begin{proof}
	We start with some preliminary considerations.
	If there exists a path $s_0 \ldots s_n$ entirely contained in $S \backslash S_p$ containing a cycle with positive weight, then $\maxcost$ of any $p$-cause is $\infty$:
	Consider a such a positive cycle reachable in $S \backslash S_p$.
	Then there are paths in $\goal$ which contain this cycle arbitrarily often.
	For any $p$-cause $\Pi$ almost all of these paths need a prefix in $\Pi$.
	Since no state in the positive cycle nor in the path from $s_0$ to the cycle is in $S_p$, such prefixes also contain the cycle arbitrarily often.
	This means these prefixes accumulate the positive weight of the cycle arbitrarily often.
	Therefore, all $p$-causes contain paths with arbitrary high weight.
	Thus, before optimizing $\Pi$ we check whether there are positive cycles reachable in the induced graph of $M$ on $S \backslash S_p$. 
	This can be done in polynomial time with the Bellman-Ford algorithm \cite{CormenLRS2009}.
	Henceforth we assume there are no such positive cycles.
	
	For (1) we show that for two $p$-causes $\Pi, \Phi$ with $\Pi \preceq \Phi$ we have $\maxcost(\Pi) \leq \maxcost(\Phi)$.
	Let $\hat\pi\in \Pi$ be arbitrary. Since $\hat\pi$ is a $p$-critical prefix (and $p>0$), there is a path $\pi \in \Paths_M(\lang) $ with $ \hat\pi \in \Pref(\pi)$.
	Since $\Phi$ is a $p$-cause, there exists $\hat \varphi \in \Phi \cap \Pref(\pi)$.
	The assumption $\Pi \preceq \Phi$ and the fact that both $\Pi$ and $\Phi$ are prefix-free then force $\hat\pi$ to be a prefix of $\hat\varphi$. Hence $c(\hat \pi) \leq c(\hat\varphi)$, and since $\hat\pi$ was arbitrary, it follows that $\maxcost(\Pi) \leq \maxcost(\Phi)$. This implies that $\Theta$ is $\maxcost$-minimal.
	
	Computing $\maxcost^{\min} = \maxcost(\Theta)$ for a non-negative weight function can be reduced to the computation of the  longest path in a modified version of $M$.
	There can be cycles with weight $0$ in $S \backslash S_p$, but in such cycles every state in the cycle has  weight $0$.
	Therefore we collapse such cycles completely without changing the value $\maxcost(\Theta)$.
	We further collapse the set $S_p$ into one absorbing state $f$.
	Computing $\maxcost(\Theta)$ now amounts to searching for a longest path from $s_0$ to $f$ in this modified weighted directed acyclic graph. 
	This can be done in linear time by finding a shortest path after multiplying all weights with $-1$.
	Therefore the problem can be solved in overall polynomial time \cite{CormenLRS2009}.
	
	\medskip
	For (2) we reduce the problem of finding a $\maxcost$-minimal $p$-cause to the solution of a max-cost reachability game as defined in \cite{BGHM2015}.
	Define the game arena $\mathcal{A}=(V, V_{Max}, V_{Min}, E)$ with
	\begin{align*}
		V= S \:\dot\cup\:\dot{S}_p, && V_{Max}=S, && V_{Min}=\dot{S}_p,
	\end{align*}
	where $\dot{S}_p$ is a copy of $S_p$. The copy of state $s\in S_p$ in $\dot{S}_p$ will be written as $\dot{s}$. There is an edge $(s,t) \in E$ between states $s$ and $t$ in $V$ if and only if one of the following conditions holds:
	\begin{enumerate}
		\item $s \in S ,t \in S \backslash S_p$, and $\mathbf{P}(s,t)>0$,
		\item $s \in S, t \in \dot{S}_p$, and for $u\in S_p$ with $t=\dot{u}$ we have $\mathbf{P}(s,u)>0$,
		\item $s \in \dot{S}_p, t \in S_p$, and $s = \dot{t}$, or
		\item $s \in \dot{S}_p$ and $t=\goalst$.
	\end{enumerate}
	We equip $\mathcal{A}$ with a weight function $w:V \to \mathbb{Q}$ by mirroring the weight function of $M$ in the following way:
	\begin{enumerate}
		\item $w(s) = c(s)$ for $s \in S \backslash S_p$, and
		\item $w(s) = 0$ for $s \in S_p$,
		\item $w(\dot{s}) = c(s)$ for $s \in S_p$,
	\end{enumerate}
	In \cite{BGHM2015} the authors define a min-cost reachability game.
	For our purposes we need the dual notion of a \emph{max}-cost reachability game, which is obtained by just changing the total payoff of a play avoiding the target set to be $-\infty$ instead of $+\infty$.
	The objective of player $Max$ with vertices $V_{Max}$ is then to maximize the total payoff of the play, while player $Min$ with vertices $V_{Min}$ wants to minimize the total payoff.
	By changing from min-cost to max-cost reachability games, the results of \cite{BGHM2015} concerning strategies for $Max$ and $Min$ are reversed.
	
	We consider the max-cost reachability game on $\mathcal{A}$ with target $\goalst$.
	Define $Val(s)$ as the total payoff if both sides play optimally if the play starts in $s$.
	We have $Val(s_0) = \infty$ if there is a positive cycle reachable in $S \backslash S_p$ by the above argumentation.
	In contrast we always have $Val(s_0) \neq -\infty$ since $\goalst$ is reachable and $\failst$ can be avoided by $Max$. We proceed to show that our reduction is correct.
	\begin{claim}
		\label{lem: MCR strategy p-cause}
		There is a $1$-$1$ correspondence between strategies $\sigma$ of $Min$ and $p$-causes in $M$ for $\goal$.
	\end{claim}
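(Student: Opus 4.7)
My plan is to use Lemma~\ref{lem: scheduler trafo} as a bridge: I will exhibit a bijection between $Min$-strategies in $\mathcal{A}$ and equivalence classes of schedulers of $\mdp_p(M)$ under $\equiv$, which by that lemma are in $1$-$1$ correspondence with $p$-causes for $\goal$ in $M$. The conceptual reason this works is that $Min$ controls precisely the same $pick$-versus-$continue$ decision at each $S_p$-state that a scheduler of $\mdp_p(M)$ does.

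First I would set up a projection from histories in $\mathcal{A}$ ending at a $Min$-vertex to finite paths in $M$. By the definition of the edges, such a history has the form $s_0 s_1 \ldots s_{i_1} \dot s_{i_1} s_{i_1} s_{i_1+1} \ldots s_{i_k} \dot s_{i_k} s_{i_k} \ldots s_n \dot s_n$ with $s_{i_1}, \ldots, s_{i_k}, s_n \in S_p$ and the intermediate $s_j$ in $S \setminus S_p$. Collapsing each block $\dot s_{i_j} s_{i_j}$ to a single occurrence of $s_{i_j}$ produces a unique finite path $\hat \pi = s_0 \ldots s_n \in \Paths_\fin(M)$ ending in $S_p$. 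Conversely every such $\hat \pi$ arises in this way from a unique $Min$-history on a play in which $Min$ has continued at every previously visited $S_p$-state.

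Next I would transfer strategies across this projection. Given a $Min$-strategy $\sigma$, define a scheduler $\mathfrak{S}_\sigma$ of $\mdp_p(M)$ by $\mathfrak{S}_\sigma(\hat \pi) = pick$ iff $\hat \pi$ ends in $S_p$ and $\sigma$ selects $\goalst$ at the $Min$-history projecting to $\hat \pi$; otherwise $\mathfrak{S}_\sigma(\hat \pi) = continue$. Conversely, a scheduler $\mathfrak{S}$ induces a $Min$-strategy $\sigma_\mathfrak{S}$ by the mirror rule at every $Min$-history. A direct check shows that these assignments are mutual inverses: $\sigma_{\mathfrak{S}_\sigma}$ and $\sigma$ agree on every history reachable under $\sigma$, and $\mathfrak{S}_{\sigma_\mathfrak{S}}$ and $\mathfrak{S}$ agree on every path reachable under $\mathfrak{S}$, so in particular $\mathfrak{S}_{\sigma_\mathfrak{S}} \equiv \mathfrak{S}$.

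Composing with Lemma~\ref{lem: scheduler trafo} then yields the claimed $1$-$1$ correspondence between $Min$-strategies and $p$-causes. I expect the main subtlety to lie in the handling of unreachable histories: two $Min$-strategies can differ on a history that occurs under neither of them yet induce the same $p$-cause, so the bijection is only strict once strategies are identified on reachable histories, mirroring the equivalence $\equiv$ used in Lemma~\ref{lem: scheduler trafo}. Once this convention is adopted, the claim is immediate.
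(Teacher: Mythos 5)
Your proof is correct, but it reaches the correspondence by a genuinely different route than the paper. The paper argues directly on the game: given a $Min$-strategy $\sigma$, it takes the plays in $Plays_{s_0}(\mathcal{A},\sigma)$ that reach $\goalst$, projects each to a finite path of $M$ by eliding the $\dot{S}_p$-vertices (dropping the final $\goalst$ when $Min$ picked), and then checks by hand that the resulting set is prefix-free, consists of $p$-critical prefixes, and covers almost every path to $\goalst$ (the converse direction being declared analogous). You instead interpose the $p$-causal MDP: you observe that $Min$-histories biject with finite paths of $M$ ending in $S_p$, that $Min$'s two moves at $\dot{s}$ mirror the $pick$/$continue$ choice at $s$ in $\mdp_p(M)$, and then you let \Cref{lem: scheduler trafo} do the remaining work. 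Your route buys economy and re-use -- the verification that the resulting path set is a $p$-cause is inherited from the earlier lemma rather than redone -- while the paper's direct play-to-path projection is the object it actually needs again in the next claim, where the total payoff of a play must be identified with the cost of the corresponding element of the $p$-cause; with your factorization that identification would still have to be spelled out separately. Your explicit caveat that the bijection is only literal once strategies are identified on histories reachable under themselves is well taken: it is the game-side analogue of the equivalence $\equiv$ on schedulers, and the paper silently assumes it.
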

	\begin{proof}[of the claim]
		Let $\sigma$ be a strategy for $Min$ and consider the set of consistent plays starting in $s_0$ that we denote by $Plays_{s_0}(\mathcal{A},\sigma)$.
		Every play $\pi \in Plays_{s_0}(\mathcal{A}, \sigma)$ that reaches $\goalst$ corresponds to a $p$-critical prefix.
		To see this, omit all states contained in $ \dot{S}_p$ from the play.
		If $\pi$ reaches $\goalst$ and the last state before $\goalst$ is in $\dot{S}_p$ then omit $\goalst$ as well.
		The resulting path in $M$ is a $p$-critical prefix. 
		Let $\Pi \subseteq \Paths_\fin(M)$ be the set of $p$-critical prefixes obtained in this way from plays in $Plays_{s_0}(\mathcal{A},\sigma)$ reaching $\goalst$.
		
		To see that any path $\pi$ to $\goalst$ in $M$ has a prefix in $\Pi$, let $\tau$ be the strategy of player $Max$ that moves along the steps of $\pi$.
		In the resulting play, either player $Min$ moves to $\goalst$ from some state $s\in  \dot{S}_p$ according to $\sigma$ and the corresponding prefix of $\pi$ is in $\Pi$, or the play reaches $\goalst$ from a state not in $\dot{S}_p$ and hence the path $\pi$ itself belongs to $\Pi$.
		Since the strategy has to make decisions for every $s \in \dot{S}_p$, every path $\pi \in \Paths_M(\goal)$ has a prefix in $\Pi$.
		$\Pi$ is prefix-free since a violation of this property would correspond to a non-consistent play, since $Min$ can only choose the edge to $\goalst$ once.
		Therefore $\Pi$ is a $p$-cause in $M$ for $\goal$.
		
		Since the reverse of this construction follows along completely analogous lines, we omit it here. 
		\hfill$\blacksquare $
	\end{proof}
	
	\begin{claim}\label{cor: Val(s0)}
		We have $\maxcost^{\min}=Val(s_0)$.
	\end{claim}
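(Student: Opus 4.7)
The plan is to establish $\maxcost^{\min} = Val(s_0)$ by analysing, for each $Min$-strategy $\sigma$ in the arena $\mathcal{A}$, the best response of $Max$ and matching it with the maximal cost of the corresponding $p$-cause $\Pi_\sigma$ provided by Claim 1, and then taking an infimum over $\sigma$.

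First I would verify the weight bookkeeping: the function $w$ on $\mathcal{A}$ has been rigged so that for any play $\rho$ of $\mathcal{A}$ starting in $s_0$ and reaching $\goalst$, if one deletes the $\dot{S}_p$-vertices to recover the associated $M$-path $\hat\pi$ (as in Claim 1), then the total $\mathcal{A}$-weight of $\rho$ equals $c(\hat\pi)$. This is a short case check: a vertex $s \in S \setminus S_p$ contributes $c(s)$ in both $\rho$ and $\hat\pi$, whereas an $S_p$-vertex $s$ appearing in $\hat\pi$ is preceded in $\rho$ by $\dot{s}$, and the pair contributes $w(\dot{s}) + w(s) = c(s) + 0 = c(s)$. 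The two terminal edges into $\goalst$ carry no additional weight.

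Second, fix $\sigma$. Every $Max$ strategy $\tau$ either produces a play reaching $\goalst$, in which case by the previous step its payoff is $c(\hat\pi)$ for some $\hat\pi \in \Pi_\sigma$, or produces a play avoiding $\goalst$, contributing payoff $-\infty$. Conversely, each $\hat\pi \in \Pi_\sigma$ is realised by the $Max$ strategy that mimics the moves of $\hat\pi$ up to its last state $s_n \in S_p$ and then enters $\dot{s}_n$, at which $\sigma$ must choose $pick$ by the definition of $\Pi_\sigma$. Under the standing assumption that there are no positive cycles in $S \setminus S_p$, and because every state reaches $\goalst$ after the preprocessing, $Max$ can always find a consistent play reaching $\goalst$, so the $-\infty$ branch is dominated. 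This yields $\sup_\tau \mathrm{payoff}(\sigma,\tau) = \sup_{\hat\pi \in \Pi_\sigma} c(\hat\pi) = \maxcost(\Pi_\sigma)$.

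Finally, taking the infimum over $\sigma$ and using Claim 1 to identify $Min$-strategy equivalence classes with $p$-causes, I would conclude
\[
Val(s_0) \;=\; \inf_\sigma \sup_\tau \mathrm{payoff}(\sigma,\tau) \;=\; \inf_\Pi \maxcost(\Pi) \;=\; \maxcost^{\min}.
\]
Here I would appeal to the positional determinacy of max-cost reachability games from~\cite{BGHM2015} to justify that the relevant suprema are attained and to record that an optimal $Min$-strategy can be chosen memoryless, which is what later ensures the extracted $p$-cause is state-based. The main obstacle is precisely the careful handling of the $-\infty$ branch and the attainment of the suprema; once the no-positive-cycle preprocessing is in place, ensuring that $Max$ can always force the play to $\goalst$, the remainder reduces to the bookkeeping translation between plays of $\mathcal{A}$ and paths of $M$ already spelled out in the first step.
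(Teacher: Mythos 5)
Your proposal is correct and follows essentially the same route as the paper: it uses the strategy/$p$-cause correspondence from the preceding claim, shows that $Max$'s best response against a fixed $Min$-strategy $\sigma$ realises exactly $\maxcost(\Pi_\sigma)$ (ruling out the $-\infty$ branch because $Max$ can always force reaching $\goalst$), and then optimises over $\sigma$. The paper's version is terser, leaving the weight bookkeeping between plays and paths implicit, but the argument is the same.
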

	\begin{proof}[of the claim]
		Recall that the value $Val(s_0)$ is defined as the total payoff of the unique play $\pi \in Plays_{s_0}(\mathcal{A}, \sigma) \cap Plays_{s_0}(\mathcal{A}, \tau)$, where $\sigma$ is the optimal strategy of $Min$ and $\tau$ is the optimal strategy of $Max$.
		For each strategy $\sigma^\prime$ for $Min$, let $\Pi_{\sigma^\prime}$ be the corresponding $p$-cause as provided by the above claim. 
		The optimal strategy $\tau^\prime$ for $Max$ against $\sigma^\prime$ has to avoid $\failst$ and hence has to create a play that ends in $\goalst$.
		The total payoff of the induced play $\varphi$ is equal to the weight of the corresponding path $\hat\varphi\in \Pi_{\sigma^\prime}$ and the total payoff that $Max$ can achieve is precisely $\maxcost(\Pi_{\sigma^\prime})$.
		The optimal strategy $\sigma$ for $Min$ hence corresponds to a $\maxcost$-minimal $p$-cause $\Pi_{\sigma}$ and $\maxcost(\Pi_{\sigma})=Val(s_0)$.
				\hfill$\blacksquare $
	\end{proof}
	Now we apply the results for max-cost reachability games to $p$-causes.
	This means we can use \cite[Algorithm 1]{BGHM2015} to compute the value $Val(s)$ of the game for all states $s \in S$ in pseudo-polynomial time.
	This includes the value $Val(s_0)=\maxcost^{\min}$.
	From the values of the game starting at different states, an optimal memoryless strategy $\sigma$ for $Min$ can be derived by fixing 
	a successor $s^\prime$ for each state $s\in V_{Min}$ with $(s,s^\prime)\in E$ and 
	\[
	Val(s)=c(s^\prime)+ Val(s').
	\]
	Since the strategy is memoryless, we get a set $Q \subseteq \dot{S}_p$ for which $Min$ chooses the edge to $\goalst$.
	By the construction from before the $\maxcost$-minimal $p$-cause $\Pi$ obtained in this way is state-based.
	
	\medskip
	For (3) we note that the decision problem ``Is $Val(s_0) \leq 0$?''  is in $\np\cap\conp$ by a reduction to mean-payoff games, as shown in \cite{Chatterjee2017}. The reduction introduces an edge from $\goalst$ back to $s_0$ and removes the state $\failst$ from the game. We have that $Val(s_0) \leq 0$ in the original max-cost reachability game if and only if the value of the constructed mean-payoff game is at most $0$. The reason is that the value of the mean-payoff game is at most $0$ if there is a strategy for $Min$ such that $Max$ can neither force a positive cycle in the original max-cost reachability game nor reach $\goalst$ with positive weight in the original game.
	We adapt the construction to show that the decision problem ``is $\maxcost^{\min} \leq \vartheta$?'', for $\vartheta \in \mathbb{Q}$, is also in $\np\cap\conp$.
	This can be achieved by adding an additional vertex $s$ with weight $-\vartheta$ to $V_{Max}$, removing the edge between $\goalst$ and $s_0$ and adding two new edges, one from $\goalst$ to $s$ and one from $s$ to $s_0$.
	The value of the resulting mean-payoff game is then at most $0$ 
	if there is a strategy for $Min$ such that $Max$ can neither force a positive cycle in the original max-cost reachability game nor reach $\goalst$ with  weight above $\vartheta$ in the original game.
	\qed
\end{proof}

\instcost*

\begin{proof}
Recall that we now work with a set of error states $E$ instead of a single state $\goalst$ and a set of terminal safe states $F$ such that  $\Pr_{f}(\Diamond E)=0$ if and only if $f\in F$.  
	We first note that for an instantaneous weight function we reduce partial expected cost to expected cost and therefore only need to consider one case:
	Given a weight function $c\colon S\to\mathbb{Q}$, consider the weight function $c'$ obtained from $c$ by forcing $c'(f)=0$ for all $f\in F$.
	Then the partial expected cost with respect to $c_\inst$ equals the expected cost with respect to $c'_\inst$.
	
	For $\expcost_\inst$ we construct an MDP $\mathcal{N}=(S \:\dot\cup\: \dot{S}_p, \{pick, continue\}, s_0, \mathbf{P}')$, where $\dot{S}_p,$ is a disjoint copy of $S_p$ in which all states are terminal. The copy of state $s\in S_p$ in $\dot{S}_p$ will be written as $\dot{s}$. We define $\mathbf{P}'(s,continue,s')  = \mathbf{P}(s,s')$ for $s,s' \in S$, and $\mathbf{P}'(s,pick,\dot{s})  = 1$ for $s \in S_p$. The action $pick$ is not enabled states outside of $S_p$, and the action $continue$ is not enabled in $\dot{S}_p$. We define the weight function $c_\mathcal{N}:S\: \dot \cup \:\dot{S}_p \to \mathbb{Q}$  by
	\[c_\mathcal{N}(s)= \begin{cases}
		c(s)& \text{if }s \in F \cup E \\
		c(t)& \text{if $s \in \dot{S}_p$ and $\dot{t} = s$} \\
		0& \text{else.}
	\end{cases} \]

	The construction is illustrated
	in Figures \ref{fig: MDP construction instantaneous 1} and \ref{fig: MDP construction instantaneous 2}:
		Consider the DTMC $M^\prime$ depicted in \Cref{fig: MDP construction instantaneous 1}. The transition probabilities are omitted. It is enough to know $S_p^\prime= \{s_0,s_1 \}$ and $F^\prime= \{\failst \}$. The constructed MDP $\mathcal{N}^\prime$ can be seen in \Cref{fig: MDP construction instantaneous 2}, where the black edges are inherited from $M^\prime$, and the red edges are added transitions belonging to the action $pick$.
		\begin{figure}[t]
			\centering
			\begin{minipage}{0.45\textwidth}
				\centering
				\begin{tikzpicture}[->,>=stealth',shorten >=1pt,auto,node distance=0.5cm, semithick]
					
					\node[scale=0.8, state,accepting] (s0) {$s_0|+3$};
					\node[scale=0.8, state,accepting] (s1) [right = 0.8 of s0]{$s_1|+1$};
					\node[scale=0.8, state] (s2) [below = 0.5 of s1]{$s_2|-2$};
					\node[scale=0.8, state] (g) [right = 0.8 of s1]{$\goalst|+7$};
					\node[scale=0.8, state] (f) [right = 0.8 of s2]{$\failst|+4$};
					
					\draw[<-] (s0) -- ++(-0.55,0.55);
					\draw (s0) -- (s1) node[pos=0.5,scale=0.8] {};
					\draw (s0) -- (s2) node[pos=0.5,scale=0.8] {};
					\draw (s1) -- (s2) node[pos=0.5,scale=0.8] {};
					\draw (s1) -- (g) node[pos=0.5,scale=0.8] {};
					\draw (s2) -- (g) node[pos=0.5,scale=0.8] {};
					\draw (s2) -- (f) node[pos=0.5,scale=0.8] {};
					
				\end{tikzpicture}
				\caption{The DTMC $M^\prime$ with instantaneous weight} \label{fig: MDP construction instantaneous 1}
			\end{minipage}
			\hfill
			\begin{minipage}{0.45\textwidth}
				\centering
				\begin{tikzpicture}[->,>=stealth',shorten >=1pt,auto,node distance=0.5cm, semithick]
					
					\node[scale=0.8, state,accepting] (s0) {$s_0|+0$};
					\node[scale=0.8, state,accepting] (s1) [right = 0.8 of s0]{$s_1|+0$};
					\node[scale=0.8, state] (s2) [below = 0.5 of s1]{$s_2|+0$};
					\node[scale=0.8, state] (g) [right = 0.8 of s1]{$\goalst|+7$};
					\node[scale=0.8, state] (f) [right = 0.8 of s2]{$\failst|+4$};
					
					\node[scale=0.8, state,accepting] (s0copy) [above = 0.5 of s0]{$\dot{s}_0|+3$};
					\node[scale=0.8, state,accepting] (s1copy) [right = 0.8 of s0copy]{$\dot{s}_1|+1$};
					
					\draw[<-] (s0) -- ++(-0.55,0.55);
					\draw (s0) -- (s1) node[pos=0.5,scale=0.8] {};
					\draw (s0) -- (s2) node[pos=0.5,scale=0.8] {};
					\draw (s1) -- (s2) node[pos=0.5,scale=0.8] {};
					\draw (s1) -- (g) node[pos=0.5,scale=0.8] {};
					\draw (s2) -- (g) node[pos=0.5,scale=0.8] {};
					\draw (s2) -- (f) node[pos=0.5,scale=0.8] {};
					\draw[red] (s0) -- (s0copy) node[pos=0.5,scale=0.8] {};
					\draw[red] (s1) -- (s1copy) node[pos=0.5,scale=0.8] {};
					
				\end{tikzpicture}
				\caption{The MDP $\mathcal{N}^\prime$ emulating instantaneous weight} \label{fig: MDP construction instantaneous 2}
			\end{minipage}\hfill	
		\end{figure}
	
	For the constructed MDP $\mathcal{N}$ we consider the accumulated weight function.
	This emulates an instantaneous weight function for the random variable $\mathcal{X}_c$ of $\expcost_\inst$.
	A scheduler of this MDP corresponds to a $p$-cause for $M$ in the same way as established in \Cref{lem: scheduler trafo}.
	Therefore the problem of finding an $\expcost$-minimal $p$-cause for instantaneous weight $c:S \to \mathbb{Q}$ in $M$ for $\Goal$ is equivalent to finding a cost-minimal scheduler in $\mathcal{N}$ for $\Goal \cup F \cup \dot{S}_p$.
	This is again the stochastic shortest path problem for $\mathcal{N}$, which can be solved in polynomial time by \cite{BT91}.
	Since the SSP is solved by a memoryless scheduler, the $\expcost_\inst$-minimal $p$-cause is state-based.
	
	For the computation of the minimal value of $\maxcost_\inst$, we enumerate the set $S_p$ as $s^0,\dots,s^k$ where $k=|S_p|-1$ such that   $c(s^i) \leq c(s^j)$ for all $0\leq i < j \leq k$.
	Now we iteratively remove states in increasing order starting with $s^0$.
	After removing a state $s^i$, we check whether $E$ is reachable in the resulting Markov chain.
	If this is the case, we continue by removing the next state.
	 If $E$ is not reachable anymore, the set $S^i:=\{s^0,\dots,s^i\}$ induces a state-based $p$-cause $\Pi_{S^i}$.
	This follows from the fact that each path from the initial state to $E$ contains a state in $S^i$ and that $S^i\subseteq S_p$. Furthermore, $\maxcost_\inst (\Pi_{S^i})\leq c(s^i)$.
	Let $j$ be the largest number less than $i$ such that $c(s^j)<c(s^i)$. There is no $p$-cause in which all paths end in $\{s^0,\dots, s^j\}$ as $E$ was still reachable after removing these states from $M$.
	So, there is no $p$-cause $\Pi$ with $\maxcost_\inst (\Pi)<c(s^i)$. Therefore, $\Pi_{S^i}$ is indeed a $\maxcost_\inst$-minimal $p$-cause.
	Since $E\subseteq S_p$, the procedure terminates at the latest when the states in $E$ are removed. Hence the algorithm finds a state-based $\maxcost_\inst$-minimal $p$-cause in polynomial time.
	\qed
\end{proof}

\end{document}